 \newcommand{\EE}{\mathcal{E}} 
 \newcommand{\PP}{\mathcal{P}} 
 \newcommand{\CE}{\textnormal{CE}} 
\newtheorem{lemma}{Lemma}
\newtheorem{proposition}{Proposition}
\newtheorem{corollary}{Corollary}
\newtheorem{remark}{Remark}
\newtheorem{example}{Example}
\begin{document}

\author{Anne G. Balter~~~~~~~Nikolaus Schweizer}

\title{Robust Decisions for Heterogeneous Agents via Certainty Equivalents\thanks{We thank Antje Mahayni, Peter Schotman, Hans Schumacher and participants at the Netspar International Pension Workshop 2021 for very helpful comments and discussions. Anne G. Balter, Department of Econometrics and Operation Research, Tilburg University, Tilburg, The Netherlands and Netspar. a.g.balter@uvt.nl. Nikolaus Schweizer, Department of Econometrics and Operation Research, Tilburg University, Tilburg, The Netherlands. n.f.f.schweizer@uvt.nl }}
\date{June 2021}
\maketitle

\begin{abstract}
We study the problem of a planner who resolves risk-return trade-offs -- like financial investment decisions -- on behalf of a collective of agents with heterogeneous risk preferences. The planner's objective is a two-stage utility functional where an outer utility function is applied to the distribution of the agents' certainty equivalents from a given decision. Assuming lognormal risks and heterogeneous power utility preferences for the agents, we characterize optimal behavior in a setting where the planner can let each agent choose between different options from a fixed menu of possible decisions, leading to a grouping of the agents by risk preferences. These optimal decision menus are derived first for the case where the planner knows the distribution of preferences exactly and then for a case where he faces uncertainty about this distribution, only having access to upper and lower bounds on agents' relative risk aversion. Finally, we provide tight bounds on the welfare loss from offering a finite menu of choices rather than fully personalized decisions. 
\end{abstract}

\section{Introduction}

\paragraph{Overview.} In this paper, we study the problem of a planner who resolves a risk-return trade-off on behalf of a collective of agents with heterogeneous preferences. Classical examples come from portfolio choice where, e.g., the managers of a mutual fund or the designers of a pension system make decisions that simultaneously affect the investments of many individuals. Ideally, every agent would receive a tailor-made investment solution that is optimal given his individual preferences.  Preference heterogeneity among investors is indeed a well-documented fact, implying that one-size-fits-all solutions may lead to significant welfare losses.\footnote{See, e.g., \cite{dahlquist2018asset}, \cite{alserda2019individual} and \cite{calvet2021cross}.} Some agents are more willing to take risks than others. Yet, for various reasons, a full personalization of investment plans may not be optimal either. For instance, there may be economies of scale in offering only a limited number of investment products to agents, thus reducing transaction costs or the costs of having products approved by a regulator. Offering only a limited number of options may also simplify communication with individual investors, leading to a more efficient exchange of information and to more robust choices. For instance, there is evidence that a Swedish pension reform that gave agents a choice between hundreds of different investment products lead to choices that were presumably suboptimal for many agents, see e.g. \cite{cronqvist2004design}. Agents were simply not capable to align such a large choice set with their preferences due to a lack of financial literacy and insufficient resources for gathering the information that would be necessary for an informed choice. 

While collective investment is our main application throughout the paper, the type of problem is more universal. Think of the development and regulation of a vaccine that is needed to end an economic lockdown due to an infectious disease. Depending on their preferences and exposures, agents may have heterogeneous opinions about the optimal thoroughness of the approval process of such a vaccine. Agents who suffer strongly from the economic lockdown may be in favor of introducing the vaccine after a relatively short period of development and testing, while others may be in favor of a longer development period and a smaller probability of harmful side effects. 

Our analysis is based on a stylized model where outcomes are lognormally distributed and their return and riskiness are controlled by a decision parameter that can be interpreted like a fraction of wealth invested into risky assets. Agents are expected utility maximizers who differ in their levels of constant relative risk aversion, their so-called \textit{risk types}. In this setting, we first derive optimal decisions of a planner who knows the distribution of risk types across the population of agents. We begin with the case where a single decision has to be made for the whole population. Afterwards, we characterize optimal choice menus in a setting where a fixed number of possible decisions are offered to agents. Here,  we compare two settings. In the first one, the planner groups agents by risk types and then optimizes the decision within each group. In the second setting, the planner  offers a menu of decisions and lets agents pick their preferred option. It turns out that the two settings are equivalent at the optimum. With the optimal choice menu in the second setting, agents choose those groups that would have been assigned to them in the first. 

Next, we study the welfare loss from offering agents only a finite number of  choices instead of a tailor-made solution for every preference type. We derive tight bounds on the resulting welfare loss which depend only on the number of  choices and on bounds on the support of the distribution of risk types. Finally, we study a situation where the planner does not know the distribution of risk types exactly. We characterize robust optimal choice menus for a planner who only has access to bounds on the support of the distribution of risk types. Relying on a game-theoretic concept of adversarial robustness, we provide an explicit expression for the decision menus the planner should offer. 

\paragraph{Preference Aggregation.}
A key ingredient of our approach is the way in which we aggregate preferences, formulating the planner's objective based on the objectives of the individual agents. We propose a tractable and intuitive approach which interpolates between two classical extremes, the utilitarian approach and the Rawlsian (or Pareto) approach. 

In a nutshell, our planner evaluates the outcome of a decision based on the resulting distribution of individual certainty equivalents across the population. The planner applies a concave utility function to this distribution of certainty equivalents to compute the welfare that arises from different decisions. He thus exhibits inequality averse preferences that are analogous to the classical expected utility formulation of risk averse preferences  \citep{vNM}. The idea of applying utility theory to social choice rather than choice under risk is old, going back e.g. to \cite{vickrey1945measuring}. 
However, in a classical utilitarian approach, it would be more common for the planner to consider the population distribution of individual utilities rather than certainty equivalents. In fact, a key result in utilitarian welfare economics, Harsanyi's Utilitarian Theorem \citep[see][]{Hammond1992}, suggests that the planner's objective should be a linear functional of the individual utilities, i.e., a weighted sum of utilities. 

A classical problem of utilitarian social preferences is the utility monster of \cite{nozick1974anarchy}, an agent whose (marginal) utility for resources is so great that it dominates the planners preferences. A utilitarian social planner might just give all resources to the agent who claims to like them the most, disregarding fairness concerns.  With a view towards practical applications, this problem is exacerbated by the fact that individual utilities are only identified up to affine transformations, i.e., up to addition and multiplication with numbers that may be arbitrarily large. In contrast, certainty equivalents are identified. They can be elicited from agents by asking the right incentivized questions. By translating agents' individual utilities into certainty equivalents, the planner converts them into the same monetary units before adding them up.  
This avoids the problems of the utility monster and of adding up incomparable quantities with unidentified scale. In the absence of risk, the planner prefers to give equal amounts to all agents rather than favoring those with a stronger preference for money -- like a utilitarian would. By measuring inequality in terms of the distribution of certainty equivalents, the planner accounts for heterogeneity in agents' risk appetite while consciously ignoring heterogeneity in agents' taste for money. 

The utilitarian approach is, of course, not the only way of formulating social preferences. Under the competing Rawlsian view  the planner would focus on the preferences of the agent who benefits the least from his decisions \citep{Rawls}.\footnote{This is related to the Pareto approach of only considering decisions that make all agents better off. A drawback of the Pareto approach is that it does not give a complete ordering of all possible decisions. When a true compromise between different interests has to be reached, the Pareto criterion is silent.} In our setting of choice under risk, this approach corresponds to a dictatorship of the most risk averse agent in the population because that agent has the lowest certainty equivalent from \textit{any} given lottery. By varying the curvature of the planner's utility function we  interpolate between more Rawlsian and more utilitarian approaches. In particular, in the limit of an infinitely concave utility function -- infinite inequality aversion -- the planner's utility converges to the Rawlsian dictatorship of the most risk averse agent. 

For some parts of our analysis, we assume that the planner aggregates certainty equivalents using a \textit{logarithmic utility function}. In a portfolio choice setting, this assumption has a natural interpretation of optimizing the population average of the certainty equivalent growth rate. The logarithmic assumption leads to two further simplifications of our analysis: When implementing a decision for a subset of agents, the planner's objective is equivalent to treating the average agent in the subset as the representative agent, maximizing only his utility. Moreover, the planner's preferences become time-consistent, avoiding common problems in dynamic decision making outside the expected utility paradigm.

\paragraph{Interpretations and Applications.}
Throughout the paper, the main interpretation of our model is that of a financial planner who acts on behalf of a collective of agents with heterogeneous risk preferences. However, there is some flexibility both in interpreting the planner's preferences and in the potential practical applications. 

One alternative interpretation is in terms of preference uncertainty of a single agent who is planning for himself.  It may take an agent years of learning to understand his own risk preferences well. The distribution of risk types in our model can be interpreted as reflecting an agent's beliefs about his own risk preferences at a given point in time. By applying our preference functional, an agent can make decisions under risk while taking into account uncertainty about his own risk preferences. In this interpretation, our model can be viewed as an adaption of \cite{klibanoff2005smooth}'s smooth ambiguity model from uncertainty about the distribution of risk to preference uncertainty.\footnote{Both models share the same two-stage structure. The agent first applies an inner utility function to the distribution of risk, holding a realization of the uncertain parameter fixed. Afterwards, the agent applies an outer utility function and averages out the  uncertain parameter. However, in our model, the uncertain parameter is not related to the distribution of risk. Instead, it is the inner utility function itself which is uncertain. We apply many different inner utility functions while they apply only one. Consequently, in their model it does not make a difference whether the outer lottery is viewed as a lottery over certainty equivalents or expected utilities if the outer utility function is  adjusted suitably. In contrast, in our setting it matters  whether the outer lottery is taken over utilities or over certainty equivalents as we propose.}

In a related interpretation of our model, there is a  planner who acts on behalf of a single agent. Due to limits on the amount of information that can be communicated, the planner has only imperfect knowledge of the agents preferences. Using our theory, the planner can explicitly take this uncertainty into account. For example, recently, there has been increased interest in ``robo-advisors'' \citep[see e.g.][]{d2019promises,robosurvey}, machine learning tools that assist investors in their decisions. To be effective, these tools need to gradually  learn the investor's preferences. Our results may be used to manage the uncertainty in this learning process, providing, e.g., worst-case optimal menus of possible investment decisions given limited preference information. 

Finally, besides the financial applications, there are various other situations that can be formalized in a similar way, trading off risk against return when designing a public good. Problems like designing national defense or choosing security standards in public transport can be thought of as problems of trading risks against expenses. In these applications, all agents in the collective are exposed to exactly the same threats so that a grouping by risk type is usually not possible. For instance, all agents in a country get the same national defense. Our results for the implementation of a single decision do apply however. In contrast, in the design of medical treatments or vaccines, trading off effectiveness or availability against potential side effects, it may be conceivable to design different products for agents with different risk types -- analogously to different investment strategies in a financial setting. Moreover, since any medical product needs approval from the relevant authorities, there will  typically be a constraint on the number of products on offer. A welfare optimum may thus consist of a small menu of products, lying somewhere between a one-size-fits-all and a fully personalized solution.

\paragraph{Related Literature.} Our paper mainly contributes to two literatures, the literature on preference uncertainty and the literature at the intersection of quantitative finance and social choice theory. The latter literature is concerned with problems like the collective investment problem which is our baseline application. Many of the more advanced problems studied in this literature such as sharing rules \citep[e.g.][] {jensen2016suboptimal,brangeretal} or generation effects \citep[e.g.][]{Schumacher} are beyond the scope of this paper. Our main contribution to this literature is relatively foundational, rethinking the planner's objective and proposing to optimize the distribution of certainty equivalents rather than the utility of a representative agent or a weighted sum of utilities in the spirit of Harsanyi's Utilitarian Theorem.\footnote{See \cite{Chenetal} and the references therein for   recent applications of the utilitarian approach to collective investment, and \cite{Schumacher} for more discussion of Rawlsian vs. utilitarian objectives.} 

We are aware of only a few previous papers in quantitative finance where the investor's objective is based on the cross-section of certainty equivalents. In \cite{DesmettreSteffensen}, an investor optimizes a sum of certainty equivalents which is interpreted in terms of preference uncertainty rather than preference heterogeneity. The focus is on resolving the resulting time inconsistency problems, see the final part of Section \ref{secdynamic} for more discussion and \cite{kryger2010some} for earlier work in this direction.\footnote{\cite{DesmettreSteffensen} also provides further pointers to the earlier literature.} The cross-section of certainty equivalents also plays an important role in financial applications of the smooth ambiguity approach as in \cite{Balteretal}. There, however, a cross-section arises due to uncertainty about the correct financial market model rather than heterogeneity in preferences.  A second novelty of our approach within this literature is to analyze the impact of grouping investors by risk type. 

The other literature to which we contribute is the literature on model uncertainty and robustness, which has been very active in the past decades in various fields such as operations research, quantitative finance and in economics.\footnote{See \cite{ben2009robust}, \cite{FollmerSchied} and \cite{hansen2008robustness} for seminal monographs on the topic from these three respective fields.} Within this literature, a comparatively small subliterature applies robust optimization ideas to uncertainty about preferences. For example, \cite{armbruster2015decision} analyze the optimization of worst-case certainty equivalents when the utility function is only known in a few points. 
Our baseline analysis of a distribution of (constant) relative risk aversion parameters can be understood as an analogue of the smooth ambiguity approach applied to preference uncertainty. Our later results correspond to a worst-case analysis with a minimax regret criterion in the spirit of \cite{bell1982regret} and \cite{loomes1982regret}. Finally, some recent applications of preference uncertainty have appeared  in the context of robo-advising, e.g. in \cite{cap1} and \cite{cap2}, but, to our knowledge, none of these papers is closely related to ours in terms of the actual analysis -- implying that there is scope for future work bringing these literatures together.

\paragraph{Structure.} Section \ref{secsetting} introduces our baseline setting. Section \ref{secopt} characterizes optimal decisions for a planner who knows the distribution of risk types, first for a one-size-fits-all decision that is the same for all agents and then for menus of decisions that are tailored to groups of agents. In Section \ref{secbounds}, we provide robust bounds on the welfare loss from implementing a finite menu of decisions rather than fully personalized solutions. Section \ref{secrobust} provides robust decision strategies for a planner who is uncertain about the distribution of risk types. Finally, Section \ref{secdynamic} shows how a dynamic multi-asset investment problem can be embedded into our  static baseline model. All proofs are in the appendix.

\section{The Setting}\label{secsetting}

In our model, a social planner faces a unit mass of agents who differ in their risk preferences. Agents are characterized by their risk type $\gamma \in \mathbb{R}_+$ which is distributed according to a distribution function $F$. Each agent faces a risky, non-negative payoff $R(m,Z)$ where $m\in \mathbb{R}$ is a decision implemented by the planner and the risk factor $Z$ is a random variable with commonly known distribution. The choice of $m$ should be thought of as a risk-return trade-off with higher values of $m$ implying  higher returns at higher risk. We are interested in situations where the planner can tailor $m(\gamma)$ to an agent's risk type to some extent. However, there is a constraint on the number of values the function $m(\gamma)$ may take, i.e., on the number of possible decisions the planner can offer to different agents. 

Throughout, we denote by $E[\cdot]$ the expected value with respect to the distribution of $Z$ and by $\EE[\cdot]$ the expected value with respect to the distribution $F$ of $\gamma$.\footnote{While $\EE[\cdot]$ is mathematically an expected value, its interpretation is more like a weighted sum over the agents in a population.} For the associated probabilities, we write $P(\cdot)$ and $\PP(\cdot)$ respectively. We assume that agents are risk averse expected utility maximizers. In particular, an agent with risk type $\gamma$ has a strictly increasing and strictly concave utility function $u_\gamma:\mathbb{R}^+\rightarrow \mathbb{R}$ and ranks payoffs according to their certainty equivalent 
\begin{equation}\label{objag}
\CE(\gamma, m)=u_{\gamma}^{-1}(E[u_{\gamma}(R(m,Z))]).
\end{equation}
We assume that the planner aims at optimizing the distribution of agents' certainty equivalents by choosing $m(\gamma)$ in a way that maximizes the functional
\begin{equation}\label{obj}
\EE[v(\CE(\gamma, m(\gamma))]).
\end{equation}
Here $v$ is  a strictly increasing function. When $v$ is linear, the planner optimizes the average certainty equivalent. Concavity of $v$ reflects an aversion against inequality among agents' certainty equivalents, while a planner with a convex $v$ is willing to sacrifice the certainty equivalents of some agents to the benefit of those with the highest certainty equivalent.  

We leave the planner's problem relatively general while making fairly concrete parametric assumptions on the distribution of payoffs and on agents' risk preferences. 
We assume that $R(m,Z)$ is of the form 
\begin{equation}\label{RmZ}
R(m,Z)=\exp\left(rT+(\mu-r)mT-\frac{1}{2}\sigma^2m^2 T +m\sigma Z \sqrt{T} \right)
\end{equation}
where $Z$ is standard normally distributed and $r$, $\mu$, $\sigma$ and $T$ are positive constants with $\mu > r$. These parametric assumptions can be motivated from a classical finance literature on optimal dynamic investment as discussed in detail in Section 
\ref{secdynamic}. In that interpretation, $R(m,Z)$ is the realized return after time $T$ for an agent who constantly reinvests a fraction $m$ of his wealth\footnote{While $m$ can be interpreted as a fraction of wealth that is invested into the risky asset for $m\in[0,1]$, we do not impose these constraints. In line with a large literature, we allow for short selling, $m<0$, and for buying stocks from borrowed money, $m>1$ in principle.} into a risky asset, which is a geometric Brownian motion with drift $\mu$ and volatility $\sigma$, while the remainder is invested into a riskless asset with interest rate $r$.\footnote{In our model formulation, we normalize the initial wealth of all risk types to 1. In Section \ref{secdynamic}, we argue that this assumption is without loss of generality.} 

With a single agent, the problem of choosing the optimal $m$, trading off higher risks against higher returns, is known as the Merton problem in finance. More generally, \eqref{RmZ} is a  tractable parametric formulation of risk-return considerations which can easily be interpreted outside the financial setting. Choosing a higher value of $m$ increases the return but also the risk that is inherent in the random payoff $R(m,Z)$. The next remark summarizes some properties of $R(m,Z)$.

\begin{remark}
We can split the  payoff $R(m,Z)$ into a deterministic factor $D(m)$ capturing returns and a stochastic factor $Y(m,Z)$ capturing risk, $R(m,Z)=D(m) Y(m,Z)$ where 
\[
D(m)=\exp\left(rT+(\mu-r)mT\right) = \exp\left( (1-m)rT+m \mu T\right)
\]
and 
\[
Y(m,Z)=\exp\left(-\frac{1}{2}\sigma^2m^2 T +m\sigma Z \sqrt{T} \right).
\]
By increasing $m$ from $0$ to $1$, the exponential growth rate in the term $D(m)$  increases from the riskless baseline $r$ to the higher rate $\mu$. With general $m>0$, $D(m)$ can reach exponential growth at any positive rate.  The price to pay for a higher rate is that risk, as captured by the term $Y(m,Z)$, increases with $m$. To see this, note first that the term $-\frac{1}{2}\sigma^2m^2 T$ in the exponent is chosen in such a way that $E[Y(m,Z)]=1$ for all $m$. In this sense, varying $m$ does not affect the scale of $Y$. It does however affect its riskiness as the variance of $Y(m,Z)$ increases with $m$,
\[
\textnormal{Var}(Y(m,Z))= \exp(m^2 \sigma^2 T). 
\] 
Thus, increasing $m$ increases returns $D(m)$, leaves $E[Y(m,Z)]$ unchanged but increases risk as captured by the variance of $Y(m,Z)$. 
\end{remark}

Regarding the distribution of risk preferences, we assume that agents with risk type $\gamma$ have a power utility of the form 
\[
u_\gamma(r)=\frac{r^{1-\gamma}-1}{1-\gamma} 
\]
for $\gamma\neq 1$ and, as usual, $u_\gamma(r)=\log(r)$ for $\gamma =1$. We assume that the distribution function $F$ of $\gamma$ is continuously differentiable with derivative $f$. The density function $f$ is assumed to be strictly positive over the support $[a,b]$ of $\gamma$ where $a>0$ and $b < \infty$. Agents' risk types thus correspond to constant relative risk aversions. They are assumed to be bounded away from the risk neutral and the infinitely risk averse cases $\gamma= 0$ and $\gamma = \infty$. While we leave the planner's preferences more general until further notice, we will occasionally assume a power utility here as well, 
\[
v(c)=\frac{c^{1-\eta}-1}{1-\eta} 
\]
for $\eta\neq 1$ and $v(c)=\log(c)$ for $\eta =1$ where $\eta$ is the planner's inequality aversion parameter. The next lemma collects some facts about the preferences of an agent with risk type $\gamma$.

\begin{lemma}\label{lem1}
The certainty equivalent of an agent with risk type $\gamma$ and implemented decision $m$ is given by
\begin{align}\label{CEgm}
\CE(\gamma, m)=\exp\left(r T + (\mu-r) m T - \frac12 \gamma m^2 \sigma^2 T\right).
\end{align}
The individually optimal decision for such an agent is given by 
\[
m^*(\gamma)=\frac{\mu-r}{\sigma^2\,\gamma}.
\]
\end{lemma}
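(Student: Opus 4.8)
The plan is to compute the certainty equivalent directly from its definition \eqref{objag} using the explicit power-utility form together with the moment generating function of the normal distribution, and then to maximize the resulting expression over $m$.

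First I would treat the case $\gamma \neq 1$. Inverting $u_\gamma(r) = (r^{1-\gamma}-1)/(1-\gamma)$ gives $u_\gamma^{-1}(y) = ((1-\gamma)y+1)^{1/(1-\gamma)}$, so that $\CE(\gamma,m) = \big(E[R(m,Z)^{1-\gamma}]\big)^{1/(1-\gamma)}$. Substituting \eqref{RmZ}, the random part of $R(m,Z)^{1-\gamma}$ is $\exp\!\big((1-\gamma)m\sigma\sqrt{T}\,Z\big)$, and since $Z$ is standard normal, $E[e^{\lambda Z}] = e^{\lambda^2/2}$ with $\lambda = (1-\gamma)m\sigma\sqrt{T}$. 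Collecting the deterministic terms of the exponent together with the contribution $\tfrac12(1-\gamma)^2 m^2\sigma^2 T$ coming from the expectation, and then multiplying the whole exponent by $1/(1-\gamma)$, the two $\sigma^2 m^2 T$-terms combine to $-\tfrac12 \sigma^2 m^2 T\,(1 - (1-\gamma)) = -\tfrac12\gamma m^2\sigma^2 T$, which is exactly \eqref{CEgm}. For $\gamma = 1$ the same formula follows either by taking $\CE(1,m) = \exp(E[\log R(m,Z)])$ and using $E[Z]=0$, or simply by noting that the right-hand side of \eqref{CEgm} is continuous in $\gamma$.

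For the second claim, observe that $\CE(\gamma,m)$ is the exponential of $g(m) := rT + (\mu-r)mT - \tfrac12\gamma m^2\sigma^2 T$, and since the exponential is strictly increasing, maximizing $\CE(\gamma,\cdot)$ is equivalent to maximizing $g$. Because $\gamma>0$ and $\sigma,T>0$, the function $g$ is a strictly concave quadratic in $m$, hence has a unique maximizer, determined by the first-order condition $g'(m) = (\mu-r)T - \gamma m\sigma^2 T = 0$, which gives $m^*(\gamma) = (\mu-r)/(\sigma^2\gamma)$.

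All steps are elementary, so there is no genuine obstacle; the only points requiring slight care are the bookkeeping of the exponent when raising to the power $1/(1-\gamma)$ and the separate (or limiting) treatment of the logarithmic case $\gamma = 1$.
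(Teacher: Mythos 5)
Your proposal is correct and follows essentially the same route as the paper: compute $E[R(m,Z)^{1-\gamma}]$ via the normal moment generating function $E[e^{\theta Z}]=e^{\theta^2/2}$, combine the exponents to obtain \eqref{CEgm}, and then maximize the strictly concave quadratic exponent in $m$ to get $m^*(\gamma)=(\mu-r)/(\sigma^2\gamma)$. Your explicit treatment of the logarithmic case $\gamma=1$ is a small additional care point the paper glosses over, but it does not change the argument.
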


The function $m^*$ corresponds to the famous investment fraction from the Merton problem. It consists of a return-risk ratio which is dampened by the individual risk aversion $\gamma$. We also define the inverse mapping $g^*(m)$ which maps a non-negative decision $m$ to the risk aversion level under which this decision is optimal,
\begin{equation}\label{gstar}
g^*(m) = \frac{\mu-r}{m \sigma^2},
\end{equation}
and thus $g^*(m^*(\gamma))=\gamma$. 

\begin{remark}
Inspecting equation \eqref{CEgm}, we see that for fixed $\gamma$ and $m$ the certainty equivalent exhibits an exponential growth behavior in $T$ at a rate given by the so-called certainty-equivalent growth rate 
$
\frac{1}{T} \log(\CE(\gamma,m)).
$
For the special case of a planner with a logarithmic utility function, $v(c)=\log(c)$, the planner's objective \eqref{obj} can be written as
$
\EE[ \log(\CE(\gamma,m(\gamma))) ].
$
Thus, in this case, the planner's objective is equivalent to maximizing the population average of the certainty-equivalent growth rate. 
\end{remark}

\begin{remark}
We have formulated the setting in such a way that there is a single random variable $Z$ which captures risk for all agents regardless of their risk type. The payoffs of all agents are perfectly correlated (up to deterministic transformations). This  assumption is without loss of generality. The planner evaluates joint distributions of risk types and random payoffs by their implied distributions of certainty equivalents, computing a certainty equivalent for each risk type before  aggregating. Thus, the results of the planner's calculation are identical for any dependence structure between agents' random payoffs. If all agents are indifferent between two payoff profiles, the planner is indifferent as well.

From an applied perspective, different dependence structures are plausible. When agents invest in the stock market and $m$ captures the riskiness of their strategy, assuming one common market risk factor for all agents is a simplifying but reasonable assumption. Yet when $m$ captures the dosage of a medical treatment and $Z$ captures potential side effects, side effects may well be independent across agents. Every agent then has their own independent copy of $Z$  which determines whether this agent suffers from side effects or not. Since the planner's preferences do not distinguish between dependent and independent risks across agents, we focus on the notationally simpler case of a single risk factor. 
\end{remark}

\section{Optimal Strategies}\label{secopt}

In this section we characterize optimal decision strategies for the planner. 
We begin with the case where the function $m(\gamma)$ can only take a single value, i.e., there is a single one-size-fits-all decision that is implemented for all agents. We then move on to the more flexible situation where a menu of $n$ possible decisions is offered to the agents. 

\subsection{One-size-fits-all Decisions}

Recall that $[a,b]$ denotes the support of $\gamma$ and that, due to monotonicity, all agents' preferred decisions lie in the interval $[m^*(b), m^*(a)] \subset (0,\infty)$. The following lemma characterizes the optimal decision if the same choice is implemented for all agents.

\begin{lemma}\label{OS1}
There exists a maximizer $m^*(a,b)\in [m^*(b), m^*(a)]$ of $\EE[v(\CE(\gamma, m)])$. The maximizer $m^*(a,b)$ is a solution to the equation 
\begin{equation}\label{mstar}
m^*(a,b) =   \frac{\mu -r}{\sigma^2\, \Gamma(a,b)}
\end{equation}
where
\[
\Gamma(a,b)=\EE\left[ \gamma\; \frac{h(\gamma, m^*(a,b))  }{\EE[h(\gamma, m^*(a,b))]}\right]\in [a,b]\;\;\;
\text{ and }\;\;\;h(\gamma,m)=\CE(\gamma,m)v'(\CE(\gamma,m)).
\]
\end{lemma}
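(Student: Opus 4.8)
The plan is to work with the objective $J(m):=\mathcal{E}[v(\CE(\gamma,m))]=\int_a^b v(\CE(\gamma,m))\,f(\gamma)\,d\gamma$, exploiting the explicit form \eqref{CEgm} of $\CE(\gamma,m)$. First I would reduce the optimization over $m\in\mathbb{R}$ to the compact interval $[m^*(b),m^*(a)]$. From \eqref{CEgm},
\begin{equation*}
\frac{\partial}{\partial m}\CE(\gamma,m)=\CE(\gamma,m)\,T\,\big[(\mu-r)-\gamma m\sigma^2\big],
\end{equation*}
which is nonnegative for all $\gamma\in[a,b]$ whenever $m\le m^*(b)$ and nonpositive for all $\gamma\in[a,b]$ whenever $m\ge m^*(a)$. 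Since $v$ is strictly increasing, the maps $m\mapsto v(\CE(\gamma,m))$ inherit these monotonicity properties, and hence so does $J$: it is nondecreasing on $(-\infty,m^*(b)]$ and nonincreasing on $[m^*(a),\infty)$. Therefore every maximizer over $\mathbb{R}$ lies in $[m^*(b),m^*(a)]$; on this compact set $J$ is continuous (the integrand is jointly continuous, hence bounded, on $[a,b]\times[m^*(b),m^*(a)]$, and $f$ is integrable), so a maximizer $m^*(a,b)$ exists by the extreme value theorem.

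Next I would establish the first-order condition. Joint continuity of $v(\CE(\gamma,m))$ and of its $m$-derivative on the compact set $[a,b]\times[m^*(b),m^*(a)]$ justifies differentiation under the integral sign, giving
\begin{equation*}
J'(m)=T\,\mathcal{E}\big[h(\gamma,m)\,\big((\mu-r)-\gamma m\sigma^2\big)\big],\qquad h(\gamma,m)=\CE(\gamma,m)\,v'(\CE(\gamma,m)),
\end{equation*}
where I used $\tfrac{\partial}{\partial m}v(\CE(\gamma,m))=v'(\CE(\gamma,m))\,\CE(\gamma,m)\,T\,[(\mu-r)-\gamma m\sigma^2]$. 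At the right endpoint $(\mu-r)-\gamma m^*(a)\sigma^2=(\mu-r)(1-\gamma/a)\le 0$ and at the left endpoint $(\mu-r)-\gamma m^*(b)\sigma^2=(\mu-r)(1-\gamma/b)\ge 0$ for all $\gamma\in[a,b]$, with strict inequality on a set of positive $F$-mass since $f>0$ on $[a,b]$ and $\mu>r$; as $h>0$ this yields $J'(m^*(a))<0$ and $J'(m^*(b))>0$. Hence no maximizer is an endpoint, so $m^*(a,b)$ is interior and satisfies $J'(m^*(a,b))=0$.

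Finally, rearranging $J'(m^*(a,b))=0$ gives $(\mu-r)\,\mathcal{E}[h(\gamma,m^*(a,b))]=m^*(a,b)\,\sigma^2\,\mathcal{E}[\gamma\,h(\gamma,m^*(a,b))]$; dividing by $\sigma^2\,\mathcal{E}[h(\gamma,m^*(a,b))]>0$ yields \eqref{mstar} with $\Gamma(a,b)=\mathcal{E}[\gamma\,h(\gamma,m^*(a,b))]/\mathcal{E}[h(\gamma,m^*(a,b))]$. The bound $\Gamma(a,b)\in[a,b]$ is immediate, being a weighted average of values $\gamma\in[a,b]$ against the strictly positive weights $h(\gamma,m^*(a,b))/\mathcal{E}[h(\gamma,m^*(a,b))]$, which integrate to one. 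The main thing to get right is the endpoint sign analysis, which is what rules out corner solutions and makes the first-order condition applicable; I also note that \eqref{mstar} is only an implicit (fixed-point) characterization, since $\Gamma(a,b)$ still depends on $m^*(a,b)$, and that no uniqueness of the maximizer is asserted or needed here.
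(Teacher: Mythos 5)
Your proof is correct and follows essentially the same route as the paper: differentiate the objective under the integral, observe that $h>0$ so the sign of $J'$ is governed by $(\mu-r)-\gamma m\sigma^2$, use this sign analysis to confine the maximizer to $[m^*(b),m^*(a)]$ (the paper phrases this via the fixed-point map $\Phi(m)=\frac{\mu-r}{\sigma^2\Gamma(m)}$, you via monotonicity of $J$ outside the interval and endpoint signs, which is the same computation), and then read off the first-order condition as the weighted-average characterization of $\Gamma(a,b)\in[a,b]$. No gaps; your endpoint sign analysis is a slightly stronger interiority statement than strictly needed, since global maximality over $\mathbb{R}$ plus smoothness already yields $J'(m^*(a,b))=0$.
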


The definition of the optimal decisions $m^*(a,b)$ in the lemma is implicit: $m^*(a,b)$ is a Merton fraction for some level of risk aversion $\Gamma(a,b)$ in the support $[a,b]$ of $\gamma$. $\Gamma(a,b)$ can be interpreted as the expected value of $\gamma$ under some change of measure proportional to $h$. However the change of measure itself depends on $m^*(a,b)$. With an implicit definition like this, existence and uniqueness of solutions are not clear a priori. The lemma shows existence of an optimal strategy which solves the first order condition \eqref{mstar}. Yet, one can construct examples in which \eqref{mstar} has multiple solutions, corresponding, e.g., to local minima or maxima. When there are multiple \textit{global} maxima, we assume throughout that  $m^*(a,b)$ is the smallest maximizer. The next lemma treats the case of power utility functions.
\begin{lemma}\label{OS2}
Suppose that $v$ is a power utility function with parameter $\eta$. Then optimal decisions  $m^*(a,b)$ are characterized as solutions to the equation 
\begin{equation}\label{mstareta}
m^*(a,b) =   \frac{\mu -r}{ \sigma^2\,\Gamma(a,b)} 
\end{equation}
where
\begin{equation}\label{exptilt}
\Gamma(a,b)=\EE\left[ \gamma\; \frac{ \exp(\gamma \theta(m^*(a,b)))}{\EE[\exp(\gamma \theta(m^*(a,b)))]}\right]\in [a,b]
\end{equation}
and $\theta(m) = \frac{1}{2}\sigma^2 (\eta-1)T m^2$. Moreover,
\item[(i)] in the logarithmic case, $\eta=1$, $\Gamma(a,b)= \EE[\gamma]$ and thus the optimal strategy is given by $m^*(a,b)= m^*(\EE[\gamma] )$. 
\item[(ii)] In the case $\eta>1$, there exists a unique solution $m^*(a,b)$ to   \eqref{mstareta}. The solution satisfies $m^*(a,b) < m^*(\EE[\gamma] )$.
\item[(iii)] In the case $\eta \in (0,1)$, any solution to  \eqref{mstareta} is greater than $m^*(\EE[\gamma])$. In particular, an optimal decision $m^*(a,b)$ satisfies 
 $m^*(a,b) > m^*(\EE[\gamma] )$. 
\end{lemma}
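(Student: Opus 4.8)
The plan is to reduce the planner's objective to a one-dimensional problem with an explicit integrand, read off the first-order condition, and then treat the three regimes for $\eta$ separately, using a covariance inequality for the comparisons with $m^*(\EE[\gamma])$ and a monotonicity argument for the uniqueness claim when $\eta>1$. For the reduction, observe from \eqref{CEgm} that $\CE(\gamma,m)^{1-\eta}=A(m)\exp(\gamma\,\theta(m))$, where $\theta(m)=\tfrac12\sigma^2(\eta-1)Tm^2$ and $A(m)=\exp\big((1-\eta)(rT+(\mu-r)mT)\big)$ does not depend on $\gamma$, so that for $\eta\neq1$
\[
\EE[v(\CE(\gamma,m))]=\frac{1}{1-\eta}\Big(A(m)\,\EE[\exp(\gamma\theta(m))]-1\Big).
\]
Since $v'(c)=c^{-\eta}$ (and $1/c$ for $\eta=1$), we have $h(\gamma,m)=\CE(\gamma,m)\,v'(\CE(\gamma,m))=\CE(\gamma,m)^{1-\eta}=A(m)\exp(\gamma\theta(m))$ for every $\eta>0$; substituting this $h$ into Lemma~\ref{OS1} yields \eqref{mstareta}--\eqref{exptilt} at once (the factor $A(m)$ cancels in the ratio defining $\Gamma(a,b)$), and existence of a maximizer is inherited from Lemma~\ref{OS1}. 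The inclusion $\Gamma(a,b)\in[a,b]$ is immediate because $\exp(\gamma\theta(m))/\EE[\exp(\gamma\theta(m))]$ is a probability density under $F$, so $\Gamma(a,b)$ is an $F$-weighted average of values $\gamma\in[a,b]$. For part (i), $\eta=1$ forces $\theta\equiv0$, hence $\Gamma(a,b)=\EE[\gamma]$ and $m^*(a,b)=m^*(\EE[\gamma])$; equivalently $\EE[\log\CE(\gamma,m)]=rT+(\mu-r)mT-\tfrac12\EE[\gamma]m^2\sigma^2T$ is the certainty-equivalent growth rate of a single type-$\EE[\gamma]$ agent, maximized at $m^*(\EE[\gamma])$ by Lemma~\ref{lem1}.

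Next I would differentiate the objective directly (differentiation under $\EE$ is justified since $\gamma$ is bounded). Using $\partial_m\CE(\gamma,m)=\CE(\gamma,m)\big((\mu-r)T-\gamma m\sigma^2T\big)$, the factor $(1-\eta)$ cancels and one obtains, for $\eta\neq1$,
\[
\frac{\partial}{\partial m}\EE[v(\CE(\gamma,m))]=T\sigma^2\,\EE\big[\CE(\gamma,m)^{1-\eta}\big]\;\Big(\tfrac{\mu-r}{\sigma^2}-m\,\overline{\gamma}(m)\Big),\qquad \overline{\gamma}(m):=\frac{\EE[\gamma\exp(\gamma\theta(m))]}{\EE[\exp(\gamma\theta(m))]}.
\]
Hence the critical points of the objective are exactly the solutions of \eqref{mstareta}, and any such solution lies in $[m^*(b),m^*(a)]\subset(0,\infty)$ because $\overline\gamma(m)=\Gamma(a,b)\in[a,b]$ with $a>0$ and $\mu>r$; in particular $m>0$, so $\theta(m)$ has the sign of $\eta-1$. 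If $\eta>1$ then $\gamma\mapsto\exp(\gamma\theta(m))$ is strictly increasing, so by the covariance inequality for monotone functions (Chebyshev's sum inequality), strictly since $F$ is non-degenerate on $[a,b]$, $\EE[\gamma\exp(\gamma\theta(m))]>\EE[\gamma]\,\EE[\exp(\gamma\theta(m))]$, whence $\Gamma(a,b)>\EE[\gamma]$ and $m=(\mu-r)/(\sigma^2\Gamma(a,b))<m^*(\EE[\gamma])$. If $\eta\in(0,1)$ the map is strictly decreasing, the inequality reverses, $\Gamma(a,b)<\EE[\gamma]$, and every solution of \eqref{mstareta} satisfies $m>m^*(\EE[\gamma])$, which proves (iii).

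It remains to prove uniqueness in (ii). Let $\Lambda(\theta)=\log\EE[\exp(\gamma\theta)]$ be the cumulant generating function of $\gamma$ under $F$; then $\overline\gamma$, viewed as a function of the tilt $\theta$, equals $\Lambda'(\theta)$, and $\Lambda''(\theta)\ge0$ (it is the variance of $\gamma$ under the tilted measure), so $\theta\mapsto\overline\gamma(\theta)$ is non-decreasing. When $\eta>1$ the tilt $\theta(m)=\tfrac12\sigma^2(\eta-1)Tm^2$ is strictly increasing on $(0,\infty)$, so $m\mapsto\overline\gamma(m)$ is non-decreasing there, and hence $\phi(m):=m\,\overline\gamma(m)$ --- the product of the strictly increasing positive function $m$ and the non-decreasing positive function $\overline\gamma(m)$ (bounded below by $a>0$) --- is strictly increasing on $(0,\infty)$, with $\phi(0^+)=0$ and $\phi(m)\ge a m\to\infty$. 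Thus the equation $\phi(m)=(\mu-r)/\sigma^2$, which is precisely \eqref{mstareta}, has a unique root in $(0,\infty)$; by the displayed derivative this is the unique critical point of the objective on all of $\mathbb{R}$ (for $m\le0$ one has $\phi(m)\le0<(\mu-r)/\sigma^2$, so the derivative is positive there), and since the derivative passes from positive to negative across it, it is the unique global maximizer $m^*(a,b)$. Together with the previous paragraph this gives $m^*(a,b)<m^*(\EE[\gamma])$, completing (ii).

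The routine part of the argument is everything through the comparison with $m^*(\EE[\gamma])$; the main obstacle is the uniqueness assertion in (ii). The key step is to recognize that the fixed-point equation \eqref{mstareta} is equivalent to $\phi(m)=(\mu-r)/\sigma^2$ for the scalar map $\phi(m)=m\,\overline\gamma(m)$, and then to show $\phi$ is strictly increasing. This rests on the convexity of the cumulant generating function of $\gamma$ --- equivalently, on the fact that exponentially tilting toward larger $\gamma$ raises the mean --- combined with the monotonicity of the tilt $\theta(m)$ in $m$ on $(0,\infty)$, a property available only for $\eta>1$; for $\eta\in(0,1)$ this monotonicity fails, which is exactly why uniqueness is not claimed in (iii).
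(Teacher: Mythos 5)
Your proof is correct and follows essentially the same route as the paper: the explicit exponential-tilting form of $h$, the sign of $\theta(m)$ governed by $\eta-1$, monotonicity of the tilted mean in the tilt (convexity of the cumulant generating function, which the paper phrases as $\psi'(t)$ being a tilted variance), and a single-crossing argument for uniqueness when $\eta>1$. Your covariance-inequality comparison and the rearrangement of the fixed point as the increasing map $\phi(m)=m\,\overline{\gamma}(m)$ meeting a constant are just equivalent restatements of the paper's observation that $\psi(\theta(m))\gtrless\EE[\gamma]$ and that the decreasing $\Phi$ crosses the identity only once.
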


A planner with logarithmic utility will thus implement the preferred solution of an agent with risk type $\EE[\gamma]$, the average risk type. The optimal decision of this planner coincides with the decision of a planner who ignores the dispersion in risk attitudes and simply optimizes the utility of a representative agent whose risk version corresponds to the population average $\EE[\gamma]$. A planner who is more inequality averse, $\eta >1$, will implement a more risk averse decision, following the preferences of some risk type $\Gamma(a,b)$ between $\EE[\gamma]$ and $b$. Finally, a less inequality averse planner, $\eta <1$, will follow the preference of some risk type $\Gamma(a,b)$ between $a$ and $\EE[\gamma]$. In this case, there may be multiple solutions to the first order condition \eqref{mstareta} but (at least) one of them will be a global maximum. 

\begin{remark}\label{remTilt}
Inspecting formula \eqref{exptilt}, we see that $\Gamma(a,b)$ is the expected value of $\gamma$ under an alternative distribution that corresponds to an exponential tilting of the true distribution. Such exponentially tilted distributions naturally occur in the analysis of model uncertainty, see e.g. \cite{hansen2008robustness}, where they correspond to maximal and minimal expected values of $\gamma$ over a set of alternative models which lie within a relative entropy ball around the original model. The sign of the parameter $\theta$ determines whether a maximal or minimal expected value is computed. In our setting, $\theta$ is positive whenever $\eta$ is greater than 1. In this case, $\Gamma(a,b)$ is larger than $\mathcal{E}[\gamma]$, corresponding to a maximal expected value and  a distortion towards more risk averse types. The opposite happens for $\eta$ less than 1.
\end{remark}

\begin{remark}\label{remPC}
In Lemma \ref{OS2}, the exponential tilting constant $\theta$ depends on the length of the investment horizon $T$ except in the logarithmic case $\eta = 1$. In particular, the longer the time horizon, the stronger is the tilting. In the limit $T \downarrow 0$ of shorter and shorter time horizons, the optimal decision approaches the one from the logarithmic case. For $\eta >1$, this convergence will be from below. As the time horizon shortens, the decision becomes riskier. For $\eta <1$ the convergence is from above, corresponding to a gradual reduction in risk taking. The impact of inequality aversion is thus stronger on longer time horizons. 
\end{remark}

Except in the logarithmic case, optimal decisions  depend on the length of the time horizon $T$. Thus, for $\eta \neq 1$, the planner faces a time consistency problem when we move from static to dynamic decision making. This is discussed further in Section \ref{secdynamic}.

\subsection{Optimal Partitioning}

We now move to the case where the planner can implement a function $m(\gamma)$ which takes at most $n$ values, extending the case $n=1$ of the previous section. We compare two different versions of the planner's problem which we call the \textit{risk grouping} and the \textit{decision menu} setting. In the \textit{risk grouping} setting, the planner partitions  the support $[a,b]$ of $\gamma$ into $n$ subintervals. For agents from the same element of the partition, the same decision is implemented but decisions may vary from one partition element to the other. The planner  optimizes both the boundaries of the partition and the decision that is implemented within each partition element. In the \textit{decision menu} setting, a partition arises endogenously through agents' choices. The planner offers a menu of $n$ decisions and each agent picks his preferred option. As a main result, we show that the outcome of the optimal risk grouping solution is identical to the outcome of the optimal decision menu.

\paragraph{Risk Grouping.} For any $c<d$ with $[c,d] \subseteq [a,b]$, we define the optimal strategy $m^*(c,d)$ as in Lemma \ref{OS1} with the distribution of $\gamma$ replaced by its restriction to the subinterval $[c,d]$.\footnote{The density of this new distribution is thus equal to $f(g)/(F(d)-F(c))$ for $g\in [c,d]$ and 0 otherwise.} This is the optimal decision when attention is restricted to agents with risk types between $c$ and $d$. We consider partitions of $[a,b]$ given by boundaries $a=g_0<\ldots <g_n=b$. In the risk grouping setting, the planner can pick the numbers  $g_i$. In addition, he can pick numbers $m_1,\ldots,m_n$ such that $m(\gamma)=m_i$ for $\gamma \in [g_{i-1},g_i)$ and $i=1,\ldots n$. In line with our assumptions, his goal is to maximize
\begin{equation}\label{OBJn}
\EE[v(\CE(\gamma, m(\gamma)))] = \sum_{i=1}^n \int_{g_{i-1}}^{g_i} v(\CE(g, m_i))f(g) dg.
\end{equation}
For given interval boundaries, summand $i$ only depends on $m_i$ but not on $m_j$, $j\neq i$. Each summand is maximized by picking $m_i=m^*(g_{i-1},g_i)$ following Lemma \ref{OS1}. 
This reduces the planner's problem to finding an optimal partition $(g_i)_i$. The next lemma characterizes optimal partitions, showing that they satisfy  a \textit{harmonic mean condition}. Recall that the harmonic mean between two positive real numbers $x$ and $y$ is given by 
\[
\mathcal{H}(x,y)= \frac{2}{\frac{1}{x}+\frac{1}{y}}
\]
and that $g^*$ from \eqref{gstar} maps a decision $m$ to the risk type  $g^*(m)$ who finds it optimal. 

\begin{lemma} \label{lemHM1}
Suppose the partition $g_0,\ldots,g_n$ with associated decisions $m_i=m^*(g_{i-1},g_i)$ is optimal in the sense of maximizing \eqref{OBJn}. Then we have for all $i=1,\ldots n-1$ \begin{equation}\label{HMcondition}
g_i = \mathcal{H}( g^*(m_i), g^*(m_{i+1}) ). 
\end{equation}
\end{lemma}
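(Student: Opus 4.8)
The plan is to derive \eqref{HMcondition} from the first-order optimality of the partition in each interior boundary $g_i$, and then to read off the harmonic-mean form from the closed expression \eqref{CEgm} for the certainty equivalent. The one mild complication is that the excerpt establishes no regularity of the map $(c,d)\mapsto m^*(c,d)$, so I cannot directly differentiate the reoptimized objective in $g_i$; I will bypass this with a ``freezing'' comparison.

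Fix $i\in\{1,\dots,n-1\}$ and hold all boundaries $g_j$, $j\neq i$, at their optimal values. Then, since a summand in \eqref{OBJn} with index $j\notin\{i,i+1\}$ depends only on $(g_{j-1},g_j)$, the objective \eqref{OBJn} with each decision reoptimized as $m^*(g_{j-1},g_j)$ equals an additive constant plus
\[
\psi(x)=\int_{g_{i-1}}^{x} v(\CE(g,m^*(g_{i-1},x)))\,f(g)\,dg+\int_{x}^{g_{i+1}} v(\CE(g,m^*(x,g_{i+1})))\,f(g)\,dg ,
\]
a function of the single variable $x:=g_i$ on $(g_{i-1},g_{i+1})$. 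Writing $m_i:=m^*(g_{i-1},g_i)$ and $m_{i+1}:=m^*(g_i,g_{i+1})$, compare $\psi$ with the companion function in which these decisions are \emph{frozen},
\[
\tilde\psi(x)=\int_{g_{i-1}}^{x} v(\CE(g,m_i))\,f(g)\,dg+\int_{x}^{g_{i+1}} v(\CE(g,m_{i+1}))\,f(g)\,dg .
\]
By the very definition of $m^*(\cdot,\cdot)$ as a within-interval maximizer, freezing can only lower each integral, so $\tilde\psi(x)\le\psi(x)$ for every $x$, with equality at $x=g_i$. Since the given partition is optimal, $g_i$ maximizes $\psi$; together with $\tilde\psi\le\psi$ and $\tilde\psi(g_i)=\psi(g_i)$ this forces $g_i$ to maximize $\tilde\psi$ as well. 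Now $\tilde\psi$ is continuously differentiable, because $f$ is continuous and $g\mapsto\CE(g,m)$ is continuous, so the interior maximum gives $\tilde\psi'(g_i)=0$, i.e.
\[
f(g_i)\bigl(v(\CE(g_i,m_i))-v(\CE(g_i,m_{i+1}))\bigr)=0 .
\]
As $f>0$ on $[a,b]$ and $v$ is strictly increasing, this yields $\CE(g_i,m_i)=\CE(g_i,m_{i+1})$: at the optimum the marginal type $g_i$ is indifferent between the decisions of the two groups on either side of him.

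It remains to turn this indifference into the harmonic mean. Plugging in \eqref{CEgm}, cancelling $\exp(rT)$ and dividing by $T$, the identity $\CE(g_i,m_i)=\CE(g_i,m_{i+1})$ becomes
\[
(\mu-r)(m_i-m_{i+1})=\tfrac12\,g_i\sigma^2\,(m_i^2-m_{i+1}^2)=\tfrac12\,g_i\sigma^2\,(m_i-m_{i+1})(m_i+m_{i+1}) .
\]
Here $m_i\neq m_{i+1}$: by Lemma \ref{OS1} the constant $\Gamma(c,d)$ attached to $m^*(c,d)$ is a weighted average of $\gamma$ with strictly positive weights against a distribution that has positive density on the whole non-degenerate interval $(c,d)$, hence $\Gamma(c,d)\in(c,d)$, and by strict monotonicity of $m^*$ this gives $m_i=m^*(g_{i-1},g_i)>m^*(g_i)>m^*(g_i,g_{i+1})=m_{i+1}$. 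Dividing by $m_i-m_{i+1}$ leaves $\mu-r=\tfrac12 g_i\sigma^2(m_i+m_{i+1})$, that is, $g_i=\tfrac{2(\mu-r)}{\sigma^2(m_i+m_{i+1})}$. Finally, since $g^*(m)=\tfrac{\mu-r}{m\sigma^2}$ by \eqref{gstar}, we have $\tfrac{1}{g^*(m_i)}+\tfrac{1}{g^*(m_{i+1})}=\tfrac{\sigma^2(m_i+m_{i+1})}{\mu-r}$, so
\[
\mathcal{H}(g^*(m_i),g^*(m_{i+1}))=\frac{2}{\tfrac{1}{g^*(m_i)}+\tfrac{1}{g^*(m_{i+1})}}=\frac{2(\mu-r)}{\sigma^2(m_i+m_{i+1})}=g_i ,
\]
which is \eqref{HMcondition}.

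The step I expect to be the main obstacle is the first one: obtaining a clean first-order condition in $g_i$ while no regularity of $m^*(\cdot,\cdot)$ in its endpoints is available; the comparison with the frozen $\tilde\psi$ is exactly what makes the argument rigorous. I am also tacitly assuming the optimal partition is non-degenerate, $a=g_0<\dots<g_n=b$, so that each $g_i$ is an interior maximizer; if an interval collapses, its decision is vacuous and the statement concerns the remaining boundaries. Everything past the indifference relation is the elementary algebra of the quadratic exponent in $\CE$.
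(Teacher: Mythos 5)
Your proof is correct and reaches the statement by the same route as the paper: a first-order condition at the interior boundary $g_i$ yields the indifference of the marginal type, $\CE(g_i,m_i)=\CE(g_i,m_{i+1})$, and the harmonic-mean form then follows from the same quadratic algebra in the exponent of \eqref{CEgm}. The one genuine difference is how the first-order condition is justified: the paper differentiates the reoptimized objective directly and invokes an envelope argument (``the partial derivatives with respect to $m$ vanish''), which tacitly presumes enough regularity of $(c,d)\mapsto m^*(c,d)$ to differentiate through it, whereas your frozen comparison $\tilde\psi\le\psi$ with equality at $g_i$ delivers that $g_i$ maximizes the smooth $\tilde\psi$ without any regularity of the maximizer map, which is a cleaner and slightly more rigorous treatment of that step. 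You also make explicit that $m_i\neq m_{i+1}$ (via strict interiority of $\Gamma(c,d)$) before dividing, a point the paper's ``solving for $g_i$'' uses implicitly; aside from these refinements the two arguments coincide.
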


The harmonic mean condition \eqref{HMcondition} follows directly from the first order condition for optimal partitions, trading off the consequences of moving a marginal agent from one group to the other. In an optimal partition, a risk type who is at the boundary between two intervals must lie at the harmonic mean between the risk types whose respective individually optimal decisions are implemented in the two intervals. 

\begin{remark}Since risk types are inversely proportional to decisions, the harmonic mean condition \eqref{HMcondition} for risk types is equivalent to an arithmetic mean condition for optimal decisions: The individually optimal decision of a risk type at the boundary must be the arithmetic mean between the decisions implemented in the two groups,
\[
m^*(g_i) = \frac12 m^*(g_{i-1},g_i) + \frac12 m^*(g_{i},g_{i+1}).
\]
\end{remark}

The following example of uniformly distributed risk types is visualized in Figure \ref{FIG:example}.

\begin{example}\label{exuni}
Suppose the planner has logarithmic utility and $\gamma$ is uniformly distributed on $[a,b]$. By Lemma \ref{OS2}, it follows that 
$$g^*(m^*(g_{i-1},g_i))=\Gamma(g_{i-1},g_i)=\frac{g_{i-1}+g_i}{2}.$$ 
Plugging this into the harmonic mean condition \eqref{HMcondition} and rearranging gives the relation $g_i=\mathcal{G}(g_{i-1},g_{i+1})$ where $\mathcal{G}(x,y)=\sqrt{xy}$ denotes the geometric mean. Since $g_0=a$ and $g_n=b$ it follows that the optimal partition is geometric, i.e.,  
$
g_i = a^{1-\frac{i}{n}} b^{\frac{i}{n}}.
$
Optimal decisions are thus given by
\[
 m^*(g_{i-1},g_i) = \frac{\mu-r}{\sigma^2 \Gamma(g_{i-1},g_i)} = \frac{\mu-r}{\sigma^2 \left(\frac{g_{i-1}+g_i}{2}\right)} .
\]
\end{example}

\begin{figure}[!h]
	\begin{subfigure}{.49\textwidth}
		\centering
		\includegraphics[width=0.9\linewidth]{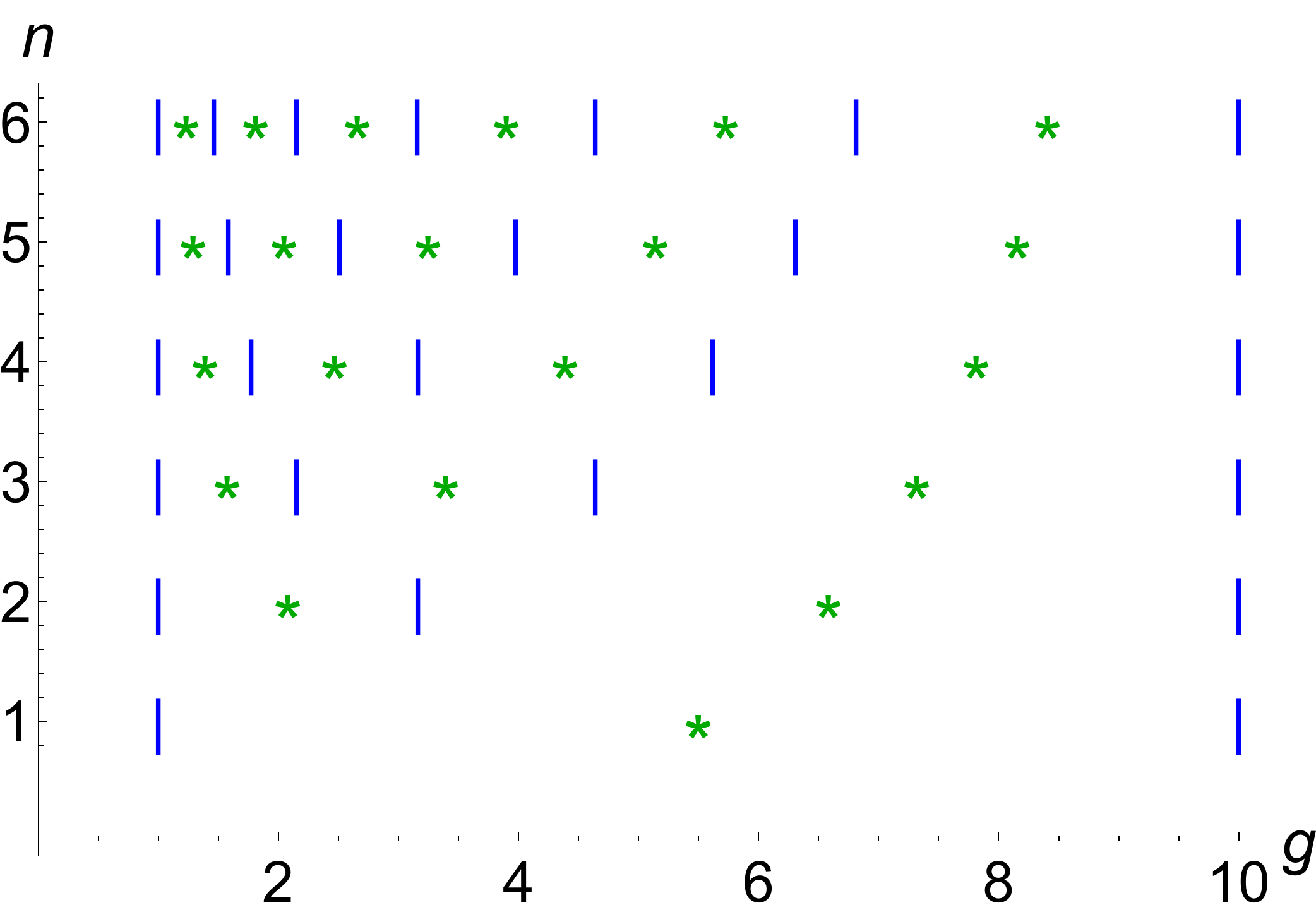}  
		\caption{$\Gamma_i$ (green stars) and $g_i$ (blue lines).}	
		\label{FIG:g}
	\end{subfigure}
	\begin{subfigure}{0.49\textwidth}
		\centering
		\includegraphics[width=0.9\linewidth]{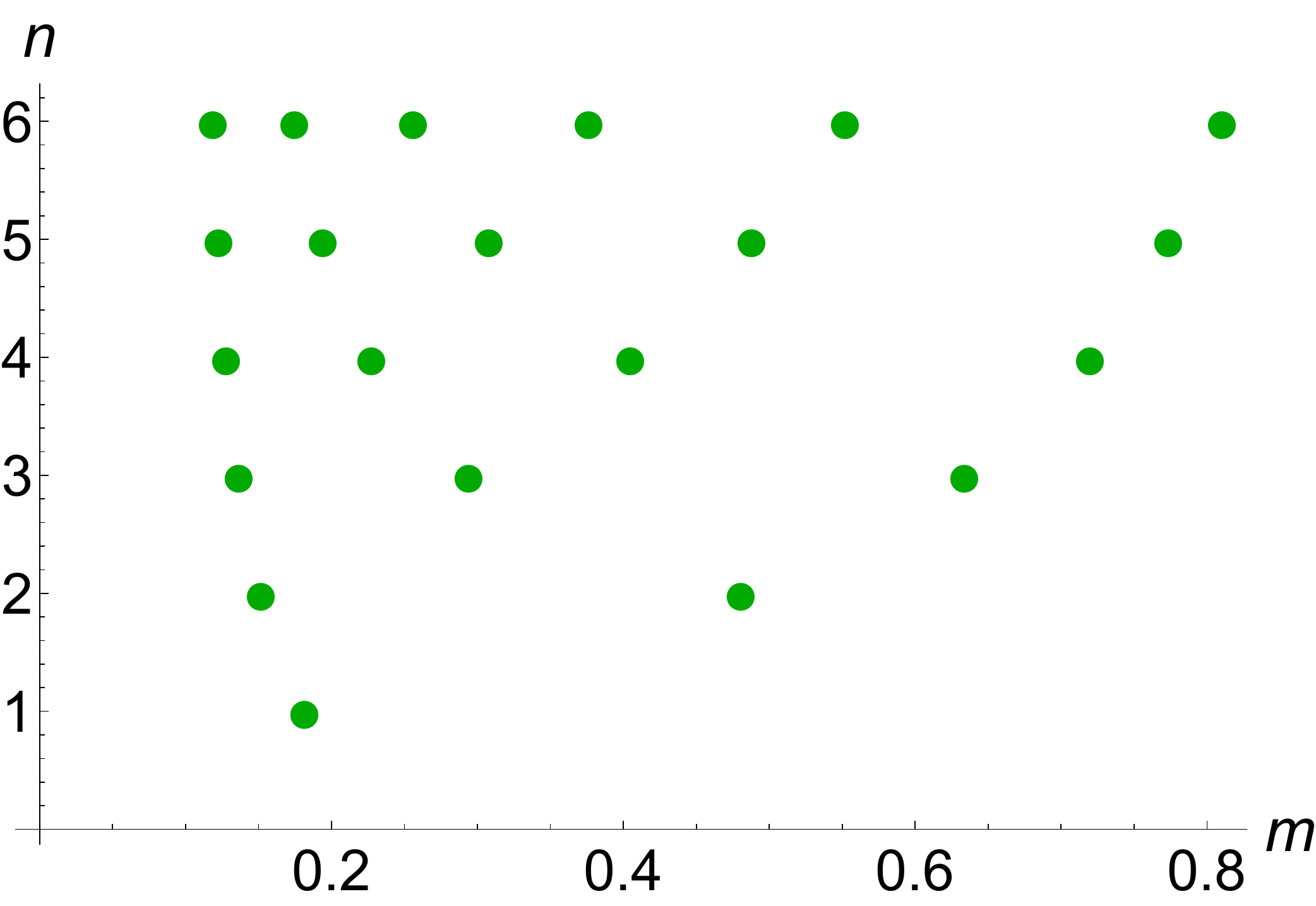}  
		\caption{$m^*(g_{i-1},g_i)$ (green dots).}	
		\label{FIG:m}
	\end{subfigure}
	\caption*{ \footnotesize  For varying $n$, the left panel shows the targeted risk types $\Gamma_i=\Gamma(g_{i-1},g_i)$ and the resulting partition boundaries $g_i$  as given in Example \ref{exuni} for the case of uniformly distributed risk types. The right panel shows the corresponding choice menus $ m^*(g_{i-1},g_i)$. The parameters are $a=1$, $b=10$ and $(\mu-r)/\sigma^2=1$.}
	\caption{Optimal decisions for a uniform distribution of risk types.}
	\label{FIG:example}
\end{figure}

\paragraph{Decision Menus.} In the risk grouping setting,  the planner can assign agents to groups and then force a decision on each group. In many practical applications, a planner's power is more limited. For instance, it may be the case that the planner simply releases a menu of $n$ products which correspond to choosing decisions $m_1,\ldots, m_n$. Agents can pick any product they like from this menu. This is the \textit{decision menu} setting. 

Consider an agent with type $g \in [a,b]$ facing a menu of decisions $m_1 >\ldots > m_n$. Which one should he pick? To this end, the agent  needs to check where his preferred decision $m^*(g)$ stands compared to the $m_i$. If $m^*(g)\geq m_1$, the agent chooses $m_1$ and if $m^*(g)\leq m_n$, he chooses $m_n$. If $m^*(g)$ lies between $m_i$ and $m_{i+1}$, the agent chooses either of these, depending on an indifference calculation which is found in the next lemma.\footnote{In this discussion, we are agnostic about the behavior of agents who are indifferent between two decisions. As risk types are continuously distributed, the set of such agents has mass zero.} 

\begin{lemma} \label{lemHM2}
An agent with type $g_i \in [a,b]$ is indifferent between decisions $m_i$ and $m_{i+1}$
iff the harmonic mean condition
$
g_i = \mathcal{H}(g^*(m_i), g^*(m_{i+1}) )
$
is satisfied. 
\end{lemma}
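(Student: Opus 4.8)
The plan is to write down the indifference condition explicitly using the closed-form certainty equivalent from Lemma \ref{lem1} and then simplify. An agent with type $g_i$ is indifferent between $m_i$ and $m_{i+1}$ precisely when $\CE(g_i, m_i) = \CE(g_i, m_{i+1})$. Using \eqref{CEgm}, both sides are exponentials of the form $rT + (\mu - r)mT - \tfrac12 g_i m^2 \sigma^2 T$, so taking logarithms and cancelling the common factor $T$ (which is positive) reduces the indifference condition to the quadratic identity
\[
(\mu - r)m_i - \tfrac12 g_i m_i^2 \sigma^2 = (\mu - r)m_{i+1} - \tfrac12 g_i m_{i+1}^2 \sigma^2.
\]

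Next I would rearrange this. Moving terms around gives $(\mu - r)(m_i - m_{i+1}) = \tfrac12 g_i \sigma^2 (m_i^2 - m_{i+1}^2) = \tfrac12 g_i \sigma^2 (m_i - m_{i+1})(m_i + m_{i+1})$. Since $m_i > m_{i+1}$ by assumption, we may divide by $m_i - m_{i+1} \neq 0$ to obtain $\mu - r = \tfrac12 g_i \sigma^2 (m_i + m_{i+1})$, i.e.
\[
g_i = \frac{2(\mu - r)}{\sigma^2 (m_i + m_{i+1})}.
\]
Finally, I would recognize the right-hand side as the harmonic mean of $g^*(m_i)$ and $g^*(m_{i+1})$: from \eqref{gstar}, $g^*(m) = (\mu - r)/(m\sigma^2)$, so $\tfrac1{g^*(m_i)} + \tfrac1{g^*(m_{i+1})} = \tfrac{\sigma^2 (m_i + m_{i+1})}{\mu - r}$, and hence $\mathcal{H}(g^*(m_i), g^*(m_{i+1})) = 2 / \big(\tfrac1{g^*(m_i)} + \tfrac1{g^*(m_{i+1})}\big)$ equals exactly the displayed expression for $g_i$. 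This chain of equivalences is reversible at every step, giving the "iff".

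This proof is essentially a routine computation, so there is no serious obstacle; the only point requiring a little care is ensuring each algebraic manipulation is genuinely an equivalence (in particular that dividing by $m_i - m_{i+1}$ is legitimate, which is guaranteed by the standing assumption $m_i > m_{i+1}$, and that exponentiation/logarithm preserve equality since $\CE$ is strictly positive). It is also worth a one-line remark that this is the same harmonic mean condition \eqref{HMcondition} appearing in Lemma \ref{lemHM1}, which is what makes the risk grouping and decision menu settings coincide at the optimum — though the full equivalence of the two settings presumably requires an additional monotonicity argument carried out elsewhere.
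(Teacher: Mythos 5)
Your proposal is correct and follows essentially the same route as the paper: the paper's proof of this lemma simply points back to the computation in the proof of Lemma \ref{lemHM1}, where the equality $\CE(g_i,m_i)=\CE(g_i,m_{i+1})$ is reduced via the closed form \eqref{CEgm} to the same quadratic identity and then solved for $g_i$ to yield the harmonic mean condition. Your factoring of $m_i^2-m_{i+1}^2$ and explicit note that dividing by $m_i-m_{i+1}\neq 0$ preserves the equivalence is just a slightly more careful presentation of the same algebra.
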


Given a menu of possible decisions, agents will sort into a partition by choosing one of the two decisions that are closest to their preferred one. Lemma \ref{lemHM2} shows that the partitions that arise endogenously in this way satisfy the harmonic mean condition \eqref{HMcondition}. Thus, instead of prescribing a partition together with associated decisions as in Lemma \ref{lemHM1}, the planner can simply prescribe the corresponding decision menu. Agents then sort into the associated optimal partition by evaluating the harmonic mean condition.  

Thus, while not all pairs of partitions and decision menus will be aligned with agents' preferences, optimal partitions have this property. In the language of mechanism design, optimal decision menus are incentive compatible: No agent has an incentive to misreport their type to the planner to be assigned to a different group. The trade-offs the planner faces when designing the partition are aligned with the trade-offs the agents face themself when picking a group. This works despite the fact that agents care only about their own risk type and not about the entire distribution like the planner. 

\section{Bounding the Welfare Loss from Grouping}\label{secbounds}

In our model, the planner is restricted in the number $n$ of possible decisions he can offer to agents. However, in order to maximize welfare, it would be optimal to offer to each agent the individually optimal decision $m^*(\gamma)$. In this section, we study the welfare loss from being forced to using a finite menu of choices. In particular, we derive sharp bounds which show how this loss depends on the number of groups $n$ and the relative difference between the extremal risk types, $b/a$. Throughout this section, we assume that the planner's utility function is logarithmic, $v(c)=\log(c)$. 

For the problem of this section, it is useful to think of strategies $m(\gamma)$ in terms of their associated implied risk aversion function $G(\gamma)$. This function maps an agent's risk type $\gamma$ to the risk type of an agent who prefers the strategy that $\gamma$ receives over all others,
\[
G(\gamma)= g^*(m(\gamma))=\frac{\mu-r}{\sigma^2 m(\gamma)}.
\]
We denote by $m_{n}^*(\gamma)$ the optimal strategy when the planner can offer a menu of $n$ different decisions as discussed in the previous section. Using that the planner has logarithmic utility, we know that for $n=1$
\[
m_{1}^*(\gamma) = \frac{\mu-r}{\sigma^2 \EE[\gamma]}
\]
so that the implied risk aversion function is constant, $G_1^*(\gamma)= \EE[\gamma]$. In the limiting case $n=\infty$ each risk type receives his individually optimal strategy so 
$G_\infty^*(\gamma)= \gamma$. This limiting case is our benchmark. For intermediate values of $n$, we know that optimal strategies are characterized by a partition $a=g_0<\ldots < g_n=b$ and by the fact that 
\[
G^*_n(\gamma)= \frac{\int_{g_{i-1}}^{g_{i}} g f(g) dg}{\int_{g_{i-1}}^{g_{i}} f(g) dg} = \EE\left[\gamma\left| \gamma \in [g_{i-1}, g_{i}] \right.\right]
\]
for $\gamma \in [g_{i-1}, g_{i}]$. Each agent's implied risk aversion is the mean risk aversion inside his partition element.  The next lemma rewrites the planner's objective in terms of $G$. 
\begin{lemma}\label{lemCEgrowth}
For any strategy $m(\gamma)$ with associated implied risk aversion function $G(\gamma)$, we can write the planner's utility as
\begin{equation}\label{eqCEg}
\frac{1}{T} \EE[\log(\CE(\gamma,m(\gamma)))]= r + \frac12\left(\frac{\mu-r}{\sigma}\right)^2 \, E
\end{equation}
where 
\[
E = \EE\left[ \frac{2}{G(\gamma)} -\frac{\gamma}{G(\gamma)^2} \right].
\]
\end{lemma}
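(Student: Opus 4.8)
The plan is to prove Lemma \ref{lemCEgrowth} by a direct computation starting from the closed form of the certainty equivalent in \eqref{CEgm}. First I would take the logarithm of $\CE(\gamma,m(\gamma))$ using \eqref{CEgm}, which gives
\[
\log(\CE(\gamma,m(\gamma))) = rT + (\mu-r) m(\gamma) T - \tfrac12 \gamma\, m(\gamma)^2 \sigma^2 T.
\]
Then I would substitute the decision $m(\gamma)$ in terms of its implied risk aversion function via the defining relation $G(\gamma) = \frac{\mu-r}{\sigma^2 m(\gamma)}$, i.e. $m(\gamma) = \frac{\mu-r}{\sigma^2 G(\gamma)}$. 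Plugging this in, the linear term becomes $(\mu-r)T \cdot \frac{\mu-r}{\sigma^2 G(\gamma)} = T \left(\frac{\mu-r}{\sigma}\right)^2 \frac{1}{G(\gamma)}$, and the quadratic term becomes $\tfrac12 \gamma \sigma^2 T \cdot \frac{(\mu-r)^2}{\sigma^4 G(\gamma)^2} = \tfrac12 T \left(\frac{\mu-r}{\sigma}\right)^2 \frac{\gamma}{G(\gamma)^2}$.

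Next I would collect terms: dividing by $T$ and factoring out $\left(\frac{\mu-r}{\sigma}\right)^2$ from the two non-constant terms yields
\[
\frac1T \log(\CE(\gamma,m(\gamma))) = r + \left(\frac{\mu-r}{\sigma}\right)^2\left(\frac{1}{G(\gamma)} - \frac12 \frac{\gamma}{G(\gamma)^2}\right) = r + \frac12\left(\frac{\mu-r}{\sigma}\right)^2\left(\frac{2}{G(\gamma)} - \frac{\gamma}{G(\gamma)^2}\right).
\]
Finally I would apply the aggregation operator $\EE[\cdot]$ over the distribution $F$ of $\gamma$. Since $r$ and $\left(\frac{\mu-r}{\sigma}\right)^2$ are deterministic constants, they pass through $\EE[\cdot]$ unchanged by linearity, and what remains inside the expectation is exactly $\frac{2}{G(\gamma)} - \frac{\gamma}{G(\gamma)^2}$, which is the quantity denoted $E$ in the statement. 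This gives \eqref{eqCEg}.

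This lemma is essentially a bookkeeping identity, so I do not anticipate any genuine obstacle; the only thing to be careful about is the algebra of substituting $m(\gamma) = (\mu-r)/(\sigma^2 G(\gamma))$ correctly and tracking the factor of $\tfrac12$ so that the coefficient structure matches the claimed form with the $\frac{2}{G(\gamma)}$ term. One should also note that the identity is purely pointwise in $\gamma$ before taking $\EE[\cdot]$, so it holds for an arbitrary (measurable) strategy $m(\gamma)$ with a well-defined implied risk aversion function, exactly as the lemma asserts — no optimality or partition structure is used at this stage.
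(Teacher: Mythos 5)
Your proposal is correct and follows exactly the paper's own argument: substitute $m(\gamma)=(\mu-r)/(\sigma^2 G(\gamma))$ into the closed-form log-certainty equivalent from Lemma \ref{lem1}, simplify, and apply $\EE[\cdot]$. The algebra and the factor of $\tfrac12$ check out, so nothing further is needed.
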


\begin{remark}\label{eqeq}
Due to the planner's logarithmic utility, the left hand side in \eqref{eqCEg} corresponds to what is often called the ``certainty equivalent growth rate''. In our setting, since the curvature in $v$ reflects an aversion to inequality, the term ``equality equivalent growth rate'' would be more appropriate.
\end{remark}

From the lemma we see that the planner's logarithmic utility grows linearly with $T$ at a rate that consists of the interest rate $r$ plus an additional term. This term consists of two factors, the square of the Sharpe ratio $(\mu-r)/\sigma$ which captures properties of the market environment and a second factor $E$ which depends on the function $G$. 

This factor $E$ is the topic of the remainder of this section. It is the preference-dependent part of the planner's growth rate. The  next lemma derives an expression for 
\begin{equation}\label{ENformula}
E_n^* = \EE\left[ \frac{2}{G_n^*(\gamma)} -\frac{\gamma}{G_n^*(\gamma)^2} \right]
\end{equation}
which is the value of $E$ for the optimal strategies associated with different values of $n$. 
\begin{lemma}\label{lemRepr}
We can write 
\begin{equation}
E_n^*\; = \sup_{a=g_0<\ldots < g_n=b
} \;
\sum_{i=1}^n \frac{\PP\left(\gamma \in [g_{i-1},g_{i}] \right)}{\EE\left[\gamma \left| \gamma \in [g_{i-1},g_{i}]\right.\right]}.\label{eqRepr}
\end{equation}
\end{lemma}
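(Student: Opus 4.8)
The idea is to turn the definition \eqref{ENformula} of $E_n^\ast$ into a two-level optimization — an inner optimization over the constant value of the implied risk aversion on each group and an outer one over the groups themselves — and to recognize the inner problem as elementary. First I would invoke Lemma~\ref{lemCEgrowth}: since the planner's logarithmic utility is a strictly increasing affine function of $E$, an $n$-valued strategy is optimal if and only if it maximizes $E$, so that $E_n^\ast=\sup\{E(m):\ m(\cdot)\ \text{takes at most}\ n\ \text{values}\}$. Because $g^\ast$ is injective, the implied risk aversion $G=g^\ast\circ m$ has the same level sets and the same number of values as $m$, and these values are strictly positive (as $[a,b]\subset(0,\infty)$ and the relevant decisions are positive). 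For a strategy whose implied risk aversion equals a constant $G_i>0$ on the $i$-th set $I_i$ of an at-most-$n$-element measurable partition of $[a,b]$,
\[
E(m)=\sum_i\int_{I_i}\!\Big(\frac{2}{G_i}-\frac{g}{G_i^2}\Big)f(g)\,dg=\sum_i\Big(\frac{2p_i}{G_i}-\frac{p_i\mu_i}{G_i^2}\Big),\qquad p_i=\PP(\gamma\in I_i),\ \ \mu_i=\EE[\gamma\mid\gamma\in I_i].
\]

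Next I would carry out the pointwise maximization of $x\mapsto 2p_i/x-p_i\mu_i/x^2$ over $x>0$: its derivative equals $(2p_i/x^3)(\mu_i-x)$, so the unique maximizer is $x=\mu_i$, with maximal value $p_i/\mu_i$. Hence $E(m)\le\sum_i p_i/\mu_i$ for \emph{every} at-most-$n$-valued strategy, with equality precisely when $G_i=\mu_i$ for all $i$, i.e.\ when the implied risk aversion on each group equals that group's conditional mean.

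It then remains to establish the two inequalities between $E_n^\ast$ and the supremum in \eqref{eqRepr}. For ``$\le$'': by the structural description of optimal $n$-valued strategies recalled just before Lemma~\ref{lemCEgrowth}, the optimal strategy corresponds to an interval partition $a=g_0^\ast<\dots<g_n^\ast=b$ with $G_n^\ast(\gamma)=\EE[\gamma\mid\gamma\in[g_{i-1}^\ast,g_i^\ast]]$ on $[g_{i-1}^\ast,g_i^\ast]$, so $E_n^\ast=\sum_i \PP(\gamma\in[g_{i-1}^\ast,g_i^\ast])/\EE[\gamma\mid\gamma\in[g_{i-1}^\ast,g_i^\ast]]$, which is at most the supremum in \eqref{eqRepr}. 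For ``$\ge$'': given any partition $a=g_0<\dots<g_n=b$, I would take the $n$-valued strategy whose implied risk aversion equals $\EE[\gamma\mid\gamma\in[g_{i-1},g_i]]$ on $[g_{i-1},g_i]$; by the equality case above this strategy attains $E=\sum_i \PP(\gamma\in[g_{i-1},g_i])/\EE[\gamma\mid\gamma\in[g_{i-1},g_i]]$, whence this number is $\le E_n^\ast$. Taking the supremum over partitions and combining the two inequalities gives \eqref{eqRepr}.

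\textbf{Main obstacle.} The only non-mechanical ingredient is that, for the ``$\le$'' direction, one must know that the optimal groups are intervals (and, if one insists on exactly $n$ non-degenerate ones, that refining a coarser optimal partition cannot decrease $\sum_i p_i/\mu_i$ — which follows from Cauchy--Schwarz, since $(p/\alpha+q/\beta)(p\alpha+q\beta)\ge(p+q)^2$ gives $p/\alpha+q/\beta\ge(p+q)^2/(p\alpha+q\beta)$). Both facts are contained in, or immediate from, the earlier analysis of optimal partitions (Lemma~\ref{lemHM1} and the surrounding discussion); the rest is the one-variable optimization above and routine bookkeeping of conditional probabilities and means.
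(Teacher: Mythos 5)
Your proposal is correct and follows essentially the same route as the paper's proof: reduce via Lemma \ref{lemCEgrowth} to maximizing $E$, fix the partition, set the implied risk aversion on each group equal to the conditional mean (the paper cites the logarithmic case of Lemma \ref{OS2}, you re-derive it by the one-variable optimization of $x\mapsto 2p_i/x-p_i\mu_i/x^2$), and observe that the resulting value is $\sum_i P_i^2/M_i=\sum_i \PP(\gamma\in[g_{i-1},g_i])/\EE[\gamma\mid\gamma\in[g_{i-1},g_i]]$ before taking the supremum over partitions. Your extra care about the two inequalities and the interval structure of optimal groups is fine but not a departure from the paper's argument.
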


Here, the supremum runs over all admissible $n$-element partitions. In the boundary cases of $n=1$ and $n=\infty$, formula \eqref{ENformula} implies even simpler expressions for $E_n^*$ as there is no dependence on an unknown optimal partition,   
\begin{equation}
E_1^* = \frac{1}{\EE[\gamma]}\;\;\;\text{ and } \;\;\;E_\infty^* = \EE\left[\frac{1}{\gamma}\right].\label{EE1inf}
\end{equation}
Jensen's inequality implies that, as expected,  $E_1^*\leq E_\infty^*$ -- there is a welfare loss from having a one-size-fits-all decision rather than individualized optimal decisions. The next lemma provides an inequality in the opposite direction, thus quantifying this welfare loss. 
\begin{lemma}\label{bd1}
We have the inequality
\[
E_\infty^*  \leq \frac{\frac{b}a+\frac{a}b +2 }{4} \; E_1^*.
\]
\end{lemma}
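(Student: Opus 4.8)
The plan is to recognize the claimed bound as a form of the Kantorovich inequality. By the explicit formulas in \eqref{EE1inf} we have $E_1^* = 1/\EE[\gamma]$ and $E_\infty^* = \EE[1/\gamma]$, so the asserted inequality $E_\infty^* \le \frac{b/a + a/b + 2}{4}\,E_1^*$ is equivalent to
\[
\EE[\gamma]\,\EE\!\left[\tfrac{1}{\gamma}\right] \;\le\; \frac{b/a + a/b + 2}{4} \;=\; \frac{(a+b)^2}{4ab},
\]
where the last equality is immediate. Hence it suffices to prove this bound for the distribution $F$, which is supported in $[a,b]$ with $0 < a \le b$.

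First I would record the elementary pointwise inequality: for every $g \in [a,b]$,
\[
\frac{1}{g} \;\le\; \frac{a+b-g}{ab},
\]
which follows from $(g-a)(b-g) \ge 0$ upon multiplying out and dividing by $abg > 0$. Integrating against $f$ gives $\EE[1/\gamma] \le (a+b-\EE[\gamma])/(ab)$. Multiplying both sides by $\EE[\gamma] > 0$ yields
\[
\EE[\gamma]\,\EE[1/\gamma] \;\le\; \frac{\EE[\gamma]\,(a+b-\EE[\gamma])}{ab}.
\]
The numerator is the concave quadratic $t \mapsto t(a+b-t)$ evaluated at $t = \EE[\gamma]$; since $\gamma$ has support $[a,b]$ we have $\EE[\gamma] \in [a,b]$, and on this range the quadratic is maximized at $t = (a+b)/2$ with value $(a+b)^2/4$. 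Therefore $\EE[\gamma]\,\EE[1/\gamma] \le (a+b)^2/(4ab)$, and dividing back through by $\EE[\gamma]$ gives the claim.

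There is essentially no serious obstacle here; the only points requiring a little care are getting the direction of the inequality right when clearing the (positive) denominators in the pointwise step, and the observation $\EE[\gamma] \in [a,b]$, which is what lets one bound the resulting quadratic in $\EE[\gamma]$. One may additionally remark that the constant $\frac{b/a+a/b+2}{4}$ is best possible in general: it is approached in the limit by distributions concentrating mass near $1/2$ at each endpoint $a$ and $b$ (only approximately admissible here, since $F$ is required to have a positive density on $[a,b]$), which matches the sharpness claims made elsewhere in the paper.
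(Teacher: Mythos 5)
Your proof is correct and follows essentially the same route as the paper's: the chord bound $1/g \le (a+b-g)/(ab)$ on $[a,b]$ (which the paper justifies by convexity, you by $(g-a)(b-g)\ge 0$ -- the same fact), followed by maximizing the quadratic $t(a+b-t)$ at $t=(a+b)/2$. The Kantorovich-inequality framing and the sharpness remark are nice touches but do not change the argument, which matches the paper's proof step for step.
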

Thus, the welfare loss can be bounded in terms of the range $[a,b]$ of $\gamma$.
If $\gamma$ is distributed between $1$ and $10$, we know that 
$E_\infty^*  \leq 3.025 \; E_1^*$, so the planner loses a factor 3 in $E$ by providing a one-size-fits-all solution rather than personalizing. 

\begin{remark}\label{rembd1}In light of \eqref{EE1inf}, the inequality in Lemma \ref{bd1} is a general fact about random variables with bounded support. The inequality  is sharp in the  boundary case of a discrete distribution where $\gamma$ takes values $a$ and $b$ with equal probability. 
\end{remark}

We next extend the bound of Lemma \ref{bd1} from $n=1$ to general $n$. 
\begin{proposition}\label{bd2}
We have the inequality
\[
E_\infty^*  \leq \frac{\left(\frac{b}a\right)^\frac1n+\left(\frac{a}b\right)^\frac1n +2 }{4} \; E_n^*.
\]
\end{proposition}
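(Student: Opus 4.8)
The plan is to reduce the claim to a statement about a single subinterval and then apply Lemma \ref{bd1} on each piece of a cleverly chosen partition. The key observation is the representation in Lemma \ref{lemRepr}: $E_n^*$ is the supremum of $\sum_{i=1}^n \PP(\gamma\in[g_{i-1},g_i])/\EE[\gamma\,|\,\gamma\in[g_{i-1},g_i]]$ over $n$-element partitions of $[a,b]$. I would not try to identify the optimal partition; instead I would simply exhibit one good partition, namely the geometric partition $g_i = a^{1-i/n} b^{i/n}$, so that each subinterval $[g_{i-1},g_i]$ has the same ratio of endpoints, $g_i/g_{i-1} = (b/a)^{1/n}$. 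Plugging this partition into the supremum gives a lower bound on $E_n^*$, which is exactly the direction we need since the claim bounds $E_\infty^*$ from above by a multiple of $E_n^*$.

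The main step is then a local version of Lemma \ref{bd1}, applied conditionally on each partition element. By Remark \ref{rembd1}, Lemma \ref{bd1} is really the statement that for any random variable $X$ supported on $[c,d]$ one has $\EE[1/X] \leq \frac{d/c + c/d + 2}{4}\cdot \frac{1}{\EE[X]}$. Applying this with $X = \gamma$ conditioned on $\{\gamma \in [g_{i-1},g_i]\}$, and using $g_i/g_{i-1} = (b/a)^{1/n}$ uniformly across $i$, gives
\begin{equation}\label{eqlocalbd}
\EE\left[\left.\frac{1}{\gamma}\,\right|\,\gamma\in[g_{i-1},g_i]\right] \;\leq\; \frac{(b/a)^{1/n} + (a/b)^{1/n} + 2}{4}\cdot\frac{1}{\EE\left[\gamma\,|\,\gamma\in[g_{i-1},g_i]\right]}.
\end{equation}
Now I multiply \eqref{eqlocalbd} by $\PP(\gamma\in[g_{i-1},g_i])$ and sum over $i=1,\dots,n$. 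On the left, the law of total expectation collapses the sum to $\EE[1/\gamma] = E_\infty^*$ (using the second identity in \eqref{EE1inf}). On the right, the constant factor is independent of $i$ and pulls out, leaving exactly $\sum_{i=1}^n \PP(\gamma\in[g_{i-1},g_i])/\EE[\gamma\,|\,\gamma\in[g_{i-1},g_i]]$, which is bounded above by $E_n^*$ via Lemma \ref{lemRepr}. Combining these yields the claimed inequality.

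The only genuinely delicate point is making sure the constant in the local bound \eqref{eqlocalbd} is exactly $\big((b/a)^{1/n}+(a/b)^{1/n}+2\big)/4$ and not something larger: this is why the geometric partition is the right choice — it equalizes the endpoint ratios so that the worst-case constant from Lemma \ref{bd1} is the same on every subinterval, and a non-geometric partition would force some subinterval to have ratio exceeding $(b/a)^{1/n}$, giving a weaker bound. I would also need to check the trivial edge cases (a partition element on which $\gamma$ has zero conditional probability can be dropped, or one can take the $g_i$ with strict inequalities since $f>0$ on $[a,b]$, so every conditional expectation is well-defined). Everything else is bookkeeping: the hardest conceptual work has already been done in Lemma \ref{bd1} and its interpretation in Remark \ref{rembd1}.
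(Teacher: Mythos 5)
Your proposal is correct and follows essentially the same route as the paper's proof: the geometric partition $g_i=a^{1-i/n}b^{i/n}$, the conditional application of Lemma \ref{bd1} on each subinterval (each with endpoint ratio $(b/a)^{1/n}$), the law of iterated expectations to recover $E_\infty^*$, and Lemma \ref{lemRepr} to bound the resulting sum by $E_n^*$. No gaps.
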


We thus see that when increasing $n$ the factor in front of $E_n^*$ decreases so the inequalities become sharper until, in the limit, the right hand side becomes $E_\infty^*$ just like the left hand side. The inequality is thus again sharp. When $\gamma$ is distributed between $1$ and $10$, the lemma tells us, e.g., that $E_\infty^*  \leq 1.37 \; E_2^*$ and $E_\infty^*  \leq 1.09 \; E_4^*$. The constant in the inequality thus approaches 1 already with a moderate number of groups. It follows that  $n$ should depend logarithmically on the ratio $b/a$ to keep the relative welfare loss bounded:

\begin{corollary}\label{corbd}
If 
\begin{equation}\label{lbnR}
n\geq \frac{\log(b/a)}{\log(4R-3 )}
\end{equation}
for some $R\geq 1$ then 
\[
E_\infty^*  \leq R \, E_n^*.
\]
\end{corollary}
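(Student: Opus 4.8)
The plan is to read the Corollary off directly from Proposition~\ref{bd2} by a short, elementary estimate. Proposition~\ref{bd2} gives
\[
E_\infty^* \le \frac{(b/a)^{1/n} + (a/b)^{1/n} + 2}{4}\, E_n^*,
\]
so it suffices to show that the hypothesis on $n$ forces the prefactor on the right-hand side to be at most $R$. Writing $x := (b/a)^{1/n}$, and recalling $b \ge a$ so that $x \ge 1$, the prefactor equals $\tfrac14\bigl(x + x^{-1} + 2\bigr)$ and the task is to bound this by $R$.

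The key (and only) trick is to trade the awkward term $x + x^{-1}$ for a clean linear one: since $x \ge 1$ we have $x^{-1} \le 1$, hence $x + x^{-1} + 2 \le x + 3$, so the prefactor is at most $(x+3)/4$. Thus it is enough to verify $x \le 4R - 3$. Assuming $R > 1$, the number $4R - 3$ exceeds $1$, so $\log(4R-3) > 0$ and the assumed lower bound $n \ge \log(b/a)/\log(4R-3)$ rearranges to $\tfrac1n \log(b/a) \le \log(4R-3)$, i.e.\ $x = (b/a)^{1/n} \le 4R - 3$. Chaining the inequalities yields $E_\infty^* \le \tfrac14(x+3)\,E_n^* \le R\, E_n^*$. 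The boundary case $R = 1$ is degenerate and handled separately: then $\log(4R-3) = 0$, so the hypothesis on $n$ can hold only when $b = a$, in which case the partition is trivial and $E_n^* = E_\infty^*$ already.

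I do not expect any real obstacle here. The two points that need a word of care are (i) the direction of the bound $x^{-1} \le 1$, which uses $b \ge a$ and is precisely what lets us replace the sharp expression $x + x^{-1}$ by $x + 1$ and thereby obtain the stated clean threshold $\log(4R-3)$ rather than something involving $\sqrt{(2R-1)^2-1}$; and (ii) the degenerate endpoint $R = 1$, where the bound on $n$ should be read as forcing $b = a$.
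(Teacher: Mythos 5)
Your proof is correct and follows essentially the same route as the paper: bound $(a/b)^{1/n}\leq 1$ to reduce the prefactor from Proposition \ref{bd2} to $\frac{1}{4}\bigl((b/a)^{1/n}+3\bigr)$, then rearrange the hypothesis on $n$ into $(b/a)^{1/n}\leq 4R-3$. Your explicit treatment of the degenerate endpoint $R=1$ is a small extra care the paper glosses over, but the argument is the same.
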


The corollary is illustrated in Figure \ref{FIG:welfare}. For different values of the relative welfare loss $R$, we plot the lower bound on $n$ from \eqref{lbnR} as a function of the heterogeneity in risk preferences as measured by the ratio $b/a$. Clearly, as $b/a$ increases, the menu size has to increase to keep the welfare loss stable at the level $R$.\footnote{Unlike the actual menu size, the lower bound is not restricted to integer values. In particular, we see that the lower bound may well be smaller than $1$ for $b/a$ not too large, indicating that offering less than one menu choice would be sufficient to guarantee a relative welfare loss of at most $R$. The way to interpret this result is that even with $n=1$ the actual welfare loss compared to individually optimal decisions is less than $R$. In this sense, a loss of $R$ would correspond to a fictitious situation with $n<1$.}  

\begin{figure}[!h]
	\centering
		\includegraphics[width=.5\linewidth]{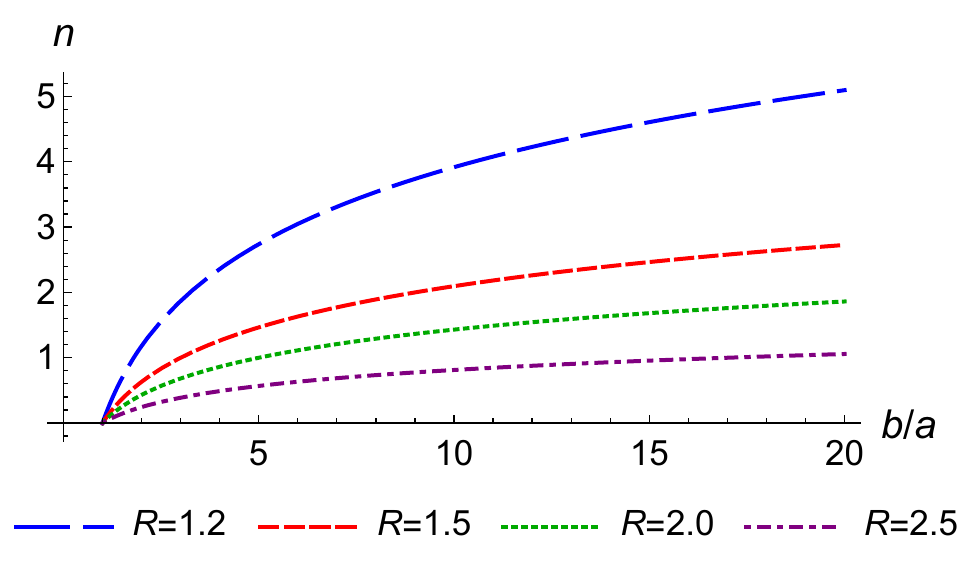}  
	\caption{Minimal menu size from \eqref{lbnR} for different values of $R$ as a function of $b/a$.}
	\label{FIG:welfare}
\end{figure}

\section{Robust Planning}\label{secrobust}

So far, we have assumed that the planner knows the distribution of risk types precisely. In this section, we relax this assumption and study optimal decisions of a planner who only knows that risk types lie in the interval $[a,b]$ but has no further knowledge about the distribution. We take an adversarial robustness approach, viewing the problem as a game between the planner and a fictitious adversary who chooses the distribution of risk types in a way that is least favorable to the planner. 

Throughout this section, we assume that the planner has logarithmic utility, $v(c)=\log(c)$. We begin again with the one-size-fits-all case where the planner chooses a single decision $m\in \mathbb{R}$ which applies for all risk types. Later, we also consider decision menus. We assume that the adversary chooses the distribution $F$ of $\gamma$ from the set $\mathcal{D}_{ab}$ of probability distributions with support in the interval $[a,b]$. Here, we do not restrict attention to continuous distributions with full support but also allow for atoms and for distributions which are concentrated in a single point. In particular, we denote by $F_x \in \mathcal{D}_{ab}$ the distribution which puts all  mass into $x\in [a,b]$.

It is easy to see that the result of Lemma \ref{OS2} carries over to this slightly more general setting: When $\gamma$ is known to be distributed according to $F\in \mathcal{D}_{ab}$, then it is optimal for the logarithmic planner to choose 
$$m^*_F=\frac{\mu-r}{\sigma^2 \EE_F[\gamma]}$$
where $\EE_F[\cdot]$ denotes the expected value over $\gamma \sim F$. 

\subsection{Robust One-Size-Fits-All Decisions}

We begin our game-theoretic analysis with a game we call the \textit{Absolute Criterion Game} (ACG). The ACG is a simultaneous-move zero-sum game in which the planner chooses $m \in \mathbb{R}$ with the goal of maximizing 
\[
\mathcal{A}(m,F)=\EE_F\left[v(\CE(\gamma, m)) \right]
\]
while the adversary chooses $F$ with the goal of minimizing $\mathcal{A}(m,F)$. The next lemma shows that the ACG has a somewhat trivial Nash equilibrium. 
\begin{lemma}\label{ACG}
In the unique Nash equilibrium of the ACG, the adversary chooses $F_b$ and the planner chooses $$m^*_{F_b}=\frac{\mu - r}{\sigma^2 b}.$$
\end{lemma}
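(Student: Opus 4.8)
The plan is to verify directly that the pair $(m^*_{F_b}, F_b)$ is a Nash equilibrium and then argue uniqueness. For a zero-sum game, the equilibrium is characterized by the saddle-point conditions: $m^*_{F_b}$ must be a best response to $F_b$, and $F_b$ must be a best response to $m^*_{F_b}$. The first condition is immediate from the remark preceding the lemma: when $\gamma \sim F_b$, we have $\EE_{F_b}[\gamma] = b$, so the logarithmic planner's best response is $m^*_{F_b} = (\mu-r)/(\sigma^2 b)$, which is exactly what is claimed. So the only real work is the adversary's side.

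For the adversary's best-response condition, I would fix $m = m^*_{F_b}$ and show that $F_b$ minimizes $\mathcal{A}(m^*_{F_b}, F) = \EE_F[\log \CE(\gamma, m^*_{F_b})]$ over $F \in \mathcal{D}_{ab}$. Since $\mathcal{A}$ is linear (in fact an expectation) in $F$, minimizing over the convex set $\mathcal{D}_{ab}$ is achieved at an extreme point, i.e. at some point mass $F_x$, $x \in [a,b]$; so it suffices to minimize $x \mapsto \log \CE(x, m^*_{F_b})$ over $x \in [a,b]$. From the formula \eqref{CEgm}, $\log \CE(\gamma, m) = rT + (\mu-r)mT - \tfrac12 \gamma m^2 \sigma^2 T$, which is strictly decreasing in $\gamma$ for any fixed $m \neq 0$. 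Hence the minimum over $[a,b]$ is attained uniquely at $\gamma = b$, i.e. at $F_b$. This confirms $F_b$ is the unique best response to $m^*_{F_b}$, and combined with the planner's side gives that $(m^*_{F_b}, F_b)$ is a Nash equilibrium.

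For uniqueness of the equilibrium, I would argue as follows. In any equilibrium $(m, F)$, the adversary plays a best response to $m$; by the same monotonicity argument, if $m \neq 0$ the unique best response is $F_b$, and if $m = 0$ the payoff $\mathcal{A}(0,F) = rT$ is constant in $F$ but then $m = 0$ is not a best response to any $F \in \mathcal{D}_{ab}$ (since $m^*_F = (\mu-r)/(\sigma^2 \EE_F[\gamma]) \neq 0$ always, as $\mu > r$ and $\EE_F[\gamma] \in [a,b]$ is finite and positive). So $m = 0$ cannot occur in equilibrium, forcing $F = F_b$; and then the planner's unique best response to $F_b$ is $m^*_{F_b}$. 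Hence the equilibrium is unique.

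The main obstacle — really the only subtlety — is handling the degenerate case $m = 0$ cleanly in the uniqueness argument, since there the objective is flat in $F$ and the monotonicity argument gives no information; the resolution is simply that $m = 0$ is never itself a best response because the planner's optimal fraction is bounded away from zero. Everything else is a direct consequence of linearity in $F$ (reducing the adversary's problem to point masses) and the explicit monotonicity of the certainty equivalent in $\gamma$ from Lemma \ref{lem1}.
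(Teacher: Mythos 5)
Your core computations are right and essentially replicate the paper's argument: with logarithmic $v$, $\mathcal{A}(m,F)=rT+(\mu-r)mT-\tfrac12 m^2\sigma^2 T\,\EE_F[\gamma]$, so for any $m\neq 0$ the adversary's payoff is strictly decreasing in $\EE_F[\gamma]$ and his unique best response is $F_b$ (note $\EE_F[\gamma]=b$ with support in $[a,b]$ forces $F=F_b$), while the planner's best response to any $F$ is $m^*_F=(\mu-r)/(\sigma^2\EE_F[\gamma])>0$, which both verifies the saddle point and disposes of the degenerate case $m=0$ exactly as needed.

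The one genuine gap is in the uniqueness step: you quantify only over pure-strategy profiles $(m,F)$, whereas ``unique Nash equilibrium'' here must be read over mixed strategies as well -- in this very framework the adversary's equilibrium strategy in the companion result, Proposition \ref{RCG}, is mixed, and the paper's own proof of Lemma \ref{ACG} explicitly carries an outer expectation $\mathbb{E}[\cdot]$ over a possible randomization of $F$. As written, your argument does not exclude an equilibrium in which the planner randomizes over several values of $m$ or the adversary randomizes over several distributions. The repair is short and uses only facts you already invoke: against any (possibly mixed) adversary strategy the planner's expected payoff is $rT+(\mu-r)mT-\tfrac12 m^2\sigma^2 T\,\mathbb{E}\left[\EE_F[\gamma]\right]$, strictly concave in $m$, so his best response is the unique pure strategy $m^*\left(\mathbb{E}\left[\EE_F[\gamma]\right]\right)>0$ and he cannot mix in equilibrium; and against any planner strategy that is not the point mass at $m=0$ (so $\mathbb{E}[m^2]>0$), the adversary's expected payoff is strictly decreasing in $\EE_F[\gamma]$, so his unique best response -- also among mixtures, by the linearity you already use -- is $F_b$. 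Adding this step makes your proof coincide with the paper's.
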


The intuition behind the lemma is straightforward. For any fixed lottery, the certainty equivalent is minimal for the most risk averse agents. Thus, the adversary's best response to any strategy of the planner is to put all mass on the upper bound $b$, choosing $F_b$. In anticipation, the planner will act as if all agents had risk aversion level $b$. 

The adversarially robust decision strategy suggested by the ACG is not fully satisfying. Basically, the worst case generated by the adversary is just a situation in which the planner cannot achieve very much because agents are maximally risk averse. There is no remaining uncertainty.  The adversary does not try to fool the planner -- but instead gives him the chance to react optimally to the worst possible situation. In this way, the planner's decision targets a relatively extreme situation while underperforming everywhere else. These considerations motivate us to consider the \textit{Relative Criterion Game} (RCG). In the RCG, the planner maximizes the fraction of welfare that is attained compared to the welfare that could be attained if $F$ was known. The planner thus chooses $m$ to maximize
\[
\mathcal{R}(m,F)=\EE_F\left[v(\CE(\gamma, m)) \right] - \EE_F\left[v(\CE(\gamma, m^*_F)) \right]
\]
while the adversary chooses $F$ with the goal of minimizing $\mathcal{R}(m,F)$. The term that is different between the RCG and the ACG depends on $F$ but not on $m$. Thus, in moving from the ACG to the RCG, we have not changed the goals of the planner but only the ``success criterion'' of the adversary.

\begin{remark}
Due to the assumption that $v(c)=\log(c)$, we can rewrite $\mathcal{R}$ as a monotonic transformation of a ratio of ``equality equivalents''
\[
v^{-1}\left(\mathcal{R}(m,F)\right) =\frac{v^{-1}\left( \EE_F\left[v(\CE(\gamma, m)) \right]\right)}{ v^{-1}\left(\EE_F\left[v(\CE(\gamma, m^*_F))\right] \right)}.
\]
In this sense, $\mathcal{R}$ is a relative criterion. Moreover, in line with Remark \ref{eqeq}, $\mathcal{R}$ is the reduction in the ``equality equivalent growth rate'' due to uncertainty about $F$. 
\end{remark}

Inspecting the objective $\mathcal{R}(m,F)$, we see that it is non-positive, and that for fixed $F$, the planner can always achieve the optimal outcome of zero by implementing the strategy $m^*_F$, $\mathcal{R}(m^*_F,F)=0$. It follows that in any equilibrium the adversary must play a mixed strategy: It cannot be optimal for him to just implement a fixed $F$ because then the planner can react optimally with $m^*_F$. Instead the adversary has to randomize between different distributions of risk types. This is reflected in the unique Nash equilibrium of the RCG which is characterized in Proposition \ref{RCG}. 
\begin{proposition}\label{RCG}
In the unique Nash equilibrium of the RCG, the adversary plays a mixed strategy, choosing $F_a$ with probability 
\[
p^*=\frac{\sqrt{b}}{\sqrt{a}+\sqrt{b}}
\]
and $F_b$ otherwise. The planner plays the pure strategy
\[
m^*(\sqrt{ab})=\frac{\mu-r}{\sigma^2 \sqrt{ab}}.
\]
The resulting equilibrium value of $\mathcal{R}$ is given by  
\[
\mathcal{R}^*=- \,\frac{(\mu-r)^2\, T}{2\sigma^2} \left( 
\frac1{\sqrt{a}}-\frac1{\sqrt{b}}
\right)^2.
\]
\end{proposition}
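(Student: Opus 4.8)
The plan is to use the logarithmic specification of $v$ to reduce the RCG to an essentially one-dimensional problem, exhibit a saddle point by hand, and then upgrade existence to uniqueness via the product structure of equilibria in zero-sum games. First I would obtain a closed form for $\mathcal{R}$. Since $v(c)=\log c$, formula \eqref{CEgm} gives $v(\CE(\gamma,m))=rT+(\mu-r)mT-\frac12\gamma m^2\sigma^2 T$, which is affine in $\gamma$; hence $\EE_F[v(\CE(\gamma,m))]$ depends on $F$ only through $e:=\EE_F[\gamma]$. Substituting the known optimal response $m^*_F=(\mu-r)/(\sigma^2 e)$ and completing the square yields
\[
\mathcal{R}(m,F)= -\frac{\sigma^2 T}{2}\, e\left(m-\frac{\mu-r}{\sigma^2 e}\right)^2 =: \phi(m,e), \qquad e = \EE_F[\gamma]\in [a,b].
\]
The two structural facts I will use are that $\phi(m,e)$ is a strictly concave parabola in $m$ for each $e$, and that, written as $\phi(m,e)=-\frac{\sigma^2T}{2}(em^2-2\frac{\mu-r}{\sigma^2}m+\frac{(\mu-r)^2}{\sigma^4 e})$, it is strictly concave in $e$ for each $m$, since $e\mapsto em^2+c/e$ is strictly convex on $(0,\infty)$.

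Next I would verify that the pair $(m^*(\sqrt{ab}),\,\nu^*)$, where $\nu^*=p^*F_a+(1-p^*)F_b$, is a Nash equilibrium. For the planner: substituting $m=m^*(\sqrt{ab})$ into $\phi$ and expanding gives $\phi(m^*(\sqrt{ab}),e)=-\frac{(\mu-r)^2T}{2\sigma^2}(\frac{e}{ab}+\frac1e-\frac2{\sqrt{ab}})$; since $e\mapsto\frac{e}{ab}+\frac1e$ is strictly convex with minimum at $e=\sqrt{ab}$, it is maximal over $[a,b]$ exactly at the endpoints, where it equals $\frac1a+\frac1b$, so
\[
\inf_{F\in\mathcal{D}_{ab}}\mathcal{R}(m^*(\sqrt{ab}),F) = -\frac{(\mu-r)^2T}{2\sigma^2}\left(\frac1a+\frac1b-\frac2{\sqrt{ab}}\right)=\mathcal{R}^*,
\]
with the infimum attained precisely at $F\in\{F_a,F_b\}$ (a distribution supported on $[a,b]$ with mean $a$ or $b$ can only be $F_a$ or $F_b$); the last equality uses $\frac1a+\frac1b-\frac2{\sqrt{ab}}=(\frac1{\sqrt a}-\frac1{\sqrt b})^2$. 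For the adversary: against $\nu^*$ the planner's payoff is $p^*\phi(m,a)+(1-p^*)\phi(m,b)$, a strictly concave parabola in $m$ with vertex at $m=(\mu-r)/(\sigma^2(p^*a+(1-p^*)b))$; since $p^*=\sqrt b/(\sqrt a+\sqrt b)$ implies $p^*a+(1-p^*)b=\sqrt{ab}$, the vertex is $m^*(\sqrt{ab})$ and the maximal value is again $\mathcal{R}^*$. Hence no player can profitably deviate, the game has value $\mathcal{R}^*$, and the stated formula for $\mathcal{R}^*$ follows.

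For uniqueness I would use that in a zero-sum game the equilibrium set is the product of the two players' sets of optimal strategies, so it suffices to show each is a singleton. Any optimal planner strategy must attain the value $\mathcal{R}^*$ when played against $\nu^*$; but $m\mapsto p^*\phi(m,a)+(1-p^*)\phi(m,b)$ is strictly concave with unique maximizer $m^*(\sqrt{ab})$ and maximal value $\mathcal{R}^*$, so the only optimal (even randomized) planner strategy is the point mass at $m^*(\sqrt{ab})$ -- and one checks separately that non-positive $m$ are strictly dominated, so this is a genuine global optimum. Symmetrically, any optimal adversary strategy $\nu$ must attain $\mathcal{R}^*$ against $m^*(\sqrt{ab})$; since $\phi(m^*(\sqrt{ab}),\EE_F[\gamma])\ge \mathcal{R}^*$ with equality only when $\EE_F[\gamma]\in\{a,b\}$, $\nu$ must be supported on $\{F_a,F_b\}$, say $\nu=pF_a+(1-p)F_b$. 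For such $\nu$, the induced best-response value $p\mapsto\max_m(p\phi(m,a)+(1-p)\phi(m,b))$ is convex in $p$ and, after the vertex computation, strictly convex with unique minimizer $p^*$; since optimality of $\nu$ forces this value to equal $\mathcal{R}^*$, we get $p=p^*$ and $\nu=\nu^*$.

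The main obstacle is the bookkeeping in the uniqueness step rather than any single calculation: the adversary's strategy space is the infinite-dimensional set of distributions over $\mathcal{D}_{ab}$, and one has to show that optimality does not merely permit but actually forces a two-point support $\{F_a,F_b\}$ with the exact weight $p^*$. This becomes manageable precisely because of the reduction in the first step -- $\mathcal{R}(m,F)$ depends on $F$ only through the scalar $\EE_F[\gamma]$ -- so the strict concavity of $\phi(m,\cdot)$ channels all of the adversary's freedom onto the two extreme means $a$ and $b$, after which everything is one-variable calculus.
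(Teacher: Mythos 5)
Your proposal is correct and follows essentially the same route as the paper: the same reduction of $\mathcal{R}(m,F)$ to a function of the scalar $\EE_F[\gamma]$, strict concavity in that mean forcing any optimal adversary strategy onto $\{F_a,F_b\}$, the quadratic-in-$m$ vertex condition $pa+(1-p)b=\sqrt{ab}$ pinning down $p^*$, and the same computation of $\mathcal{R}^*$. The only difference is organizational -- you verify the candidate equilibrium and then obtain uniqueness from zero-sum interchangeability plus strict concavity, while the paper argues constructively via the adversary's best-response correspondence (preferring $F_b$ iff $g^*(m)<\sqrt{ab}$) and explicitly rules out pure equilibria -- but both rest on identical calculations.
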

The Nash equilibrium of the RCG is thus indeed in mixed strategies. The planner has to guess where the adversary is placing the risk types in the interval $[a,b]$. To make this as hard as possible for the planner, the adversary randomizes, either putting all risk types to the highest possible level of risk aversion or to the lowest possible level.\footnote{Notice that this strategy represents a probability distribution over probability distributions. With probability $p$, all mass is in $a$, otherwise it is in $b$. This is distinct from a situation where mass $p$ is in $a$ and mass $1-p$ is in $b$, i.e., where some agents have the highest risk type while others have the lowest one.} The planner reacts to this randomized strategy by picking a well-chosen middle ground. His optimal strategy is the decision that is for optimal $\gamma$ at the geometric mean of $a$ and $b$. 

\subsection{Robust Optimal Partitioning}

We next study what the robust planning problem looks like when the planner can offer agents a menu of $n$ possible choices, $m_1>\ldots>m_n$. Agents pick a choice from the menu by comparing their risk type to the partition implied by the $m_i$ as described in Lemma \ref{lemHM2}. We call the corresponding versions of our two games the $n$-ACG and the $n$-RCG.  In the case of the absolute criterion game $n$-ACG, we find that the argument of Lemma \ref{ACG} still applies. For any given decision menu, the adversary minimizes welfare by making agents as risk averse as possible, concentrating all mass in $b$. Having the possibility to offer multiple products does not help the planner here. The best he can do is to offer what is optimal for maximally risk averse agents with type $b$. 

Analyzing the relative criterion game $n$-RCG is more rewarding. Here, we do not attempt a full game-theoretic analysis  like in Proposition \ref{RCG}. Instead, we  focus on a simpler question, accounting for the fact that we are more interested in the planner's optimal behavior than in that of the adversary. In Proposition \ref{RCG}, the planner's robust optimal strategy is to implement the preferred decision of an agent whose risk type is the geometric mean of $a$ and $b$. In Proposition \ref{nRCG}, we extend this robust strategy to menus of decisions. We show that there is a  unique menu which generalizes the geometric mean strategy. 

\begin{proposition}\label{nRCG}
In any equilibrium of the $n$-RCG in which the planner plays a pure strategy, this strategy consists of offering the menu of choices $m^*_1>\ldots>m^*_n$ given by 
\[
m^*_i=\frac{\mu-r}{\sigma^2 \Gamma^*_i}
\]
where 
\begin{equation}\label{defGamma}
\Gamma^*_i=g^*(m^*_i)= \frac{a b}{h_{i-1}h_i}
\;\;\;\;\;\;\;\;\text{and}\;\;\;\;\;\;\;\;
h_i=\sqrt{a}\,\frac{i}{n}+\sqrt{b}\,\frac{n-i}{n}.
\end{equation}
The risk type $g^*_i$ of an agent who is indifferent between $m^*_i$ and $m^*_{i+1}$ is 
\begin{equation}\label{defg}
g^*_i = \frac{a b}{h_i^2}.
\end{equation}
The resulting candidate for an equilibrium value of $\mathcal{R}$ is given by  
\[
\mathcal{R}^*=- \,\frac{(\mu-r)^2\, T }{2\sigma^2 n^2} \left( 
\frac1{\sqrt{a}}-\frac1{\sqrt{b}}
\right)^2.
\]
\end{proposition}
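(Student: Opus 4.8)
The plan is to collapse the $n$-RCG into a deterministic minimax-regret problem and then solve that problem after a change of variables. Write $\mathcal{R}(m,F)$ for the $n$-RCG objective, i.e.\ the welfare gap between the menu $m$ and the $F$-optimal $n$-menu, and let $\tau$ denote the adversary's (mixed) strategy, a distribution over $\mathcal{D}_{ab}$. By Lemma~\ref{lem1} and $v=\log$, $v(\CE(\gamma,m))=rT+(\mu-r)mT-\tfrac12\gamma m^2\sigma^2T$ is a downward parabola in $m$ with vertex $m^*(\gamma)$; hence an agent of type $\gamma$ facing the menu picks the entry $m_{i(\gamma)}$ closest to $m^*(\gamma)$ (reproducing Lemma~\ref{lemHM2}, and gluing the cells so that $\gamma\mapsto v(\CE(\gamma,m_{i(\gamma)}))$ is continuous on $[a,b]$), and his shortfall relative to his individually optimal decision is the personalization regret $\rho(\gamma,m):=v(\CE(\gamma,m^*(\gamma)))-v(\CE(\gamma,m_{i(\gamma)}))=\tfrac12\sigma^2T\,\gamma\,(m_{i(\gamma)}-m^*(\gamma))^2\ge 0$. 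On a point mass $F_x$ the $F_x$-optimal menu attains $v(\CE(x,m^*(x)))$, so $\mathcal{R}(m,F_x)=-\rho(x,m)$. Since the benchmark term $\sup_M\EE_F[v(\CE(\gamma,M(\gamma)))]$ is a supremum of $F$-linear functionals, it is convex in $F$, so $\mathcal{R}(m,\cdot)$ is concave on the convex set $\mathcal{D}_{ab}$ and is minimized at an extreme point, i.e.\ a point mass. Consequently, for any menu $m$ and any best response $\tau$ of the adversary to $m$,
\[
\EE_\tau[\mathcal{R}(m,F)] \;=\; \min_{F\in\mathcal{D}_{ab}}\mathcal{R}(m,F) \;=\; -\max_{x\in[a,b]}\rho(x,m) .
\]

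Next I would solve $\min_m\max_{x\in[a,b]}\rho(x,m)$. Substituting $y=1/\gamma$ and $y_i=1/g^*(m_i)$, $\rho$ becomes proportional to $(y-y_i)^2/y$, each agent chooses the nearest $y_i$, and the menu induces in the $y$-coordinate the nearest-point partition with cell midpoints $\tfrac12(y_i+y_{i+1})$ (equivalently the harmonic-mean condition of Lemma~\ref{lemHM2} in the $\gamma$-coordinate). Let $a=t_0<\dots<t_k=b$ be the at most $n$ nonempty cells and put $s_j:=1/t_j$, so $s_0=1/a>\dots>s_k=1/b$. On cell $j$ the map $y\mapsto(y-y_j)^2/y$ is maximized at an endpoint of the cell, so the worst regret on cell $j$ is at least $\tfrac{(\mu-r)^2T}{2\sigma^2}\,\min_{z}\max\bigl\{(s_{j-1}-z)^2/s_{j-1},\,(s_j-z)^2/s_j\bigr\}$ (the value achievable with the best placement of the decision), and an elementary computation evaluates this inner min-max as $(\sqrt{s_{j-1}}-\sqrt{s_j})^2$, attained at $z=\sqrt{s_{j-1}s_j}$. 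Because $\sum_{j=1}^{k}(\sqrt{s_{j-1}}-\sqrt{s_j})=1/\sqrt a-1/\sqrt b$ with all summands positive, $\max_j(\sqrt{s_{j-1}}-\sqrt{s_j})\ge\tfrac1n(1/\sqrt a-1/\sqrt b)$, hence
\[
\max_{x\in[a,b]}\rho(x,m)\;\ge\;\frac{(\mu-r)^2T}{2\sigma^2 n^2}\Bigl(\frac1{\sqrt a}-\frac1{\sqrt b}\Bigr)^2\;=\;-\mathcal{R}^* .
\]
Equality forces $k=n$ and all increments equal, i.e.\ $\sqrt{s_j}=\tfrac{n-j}{n}\tfrac1{\sqrt a}+\tfrac jn\tfrac1{\sqrt b}$, equivalently $t_j=ab/h_j^2=g^*_j$, together with $y_j=\sqrt{s_{j-1}s_j}=h_{j-1}h_j/(ab)$, i.e.\ $g^*(m_j)=ab/(h_{j-1}h_j)=\Gamma^*_j$; the identity $h_{j-1}+h_{j+1}=2h_j$ then checks that these decisions indeed induce the partition $(g^*_j)$. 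So the menu $(m^*_i)$ of the statement is the unique minimizer of $\max_x\rho(\cdot,m)$, with value $-\mathcal{R}^*$.

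Finally I would assemble the game-theoretic conclusion. Write $\bar m=(m^*_1,\dots,m^*_n)$ for the candidate menu. If $(m,\tau)$ is an equilibrium in which the planner plays the pure strategy $m$, the first display gives $\EE_\tau[\mathcal{R}(m,F)]=-\max_x\rho(x,m)$, while optimality of $m$ against $\tau$, together with $\EE_\tau[\mathcal{R}(\bar m,F)]\ge\min_F\mathcal{R}(\bar m,F)=-\max_x\rho(x,\bar m)=\mathcal{R}^*$, yields $\EE_\tau[\mathcal{R}(m,F)]\ge\mathcal{R}^*$; hence $\max_x\rho(x,m)\le-\mathcal{R}^*$. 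Combined with the lower bound above, $\max_x\rho(x,m)=-\mathcal{R}^*$, so $m$ attains the minimax regret and, by uniqueness of the minimizer, $m=\bar m$, with equilibrium value $\mathcal{R}^*$. The formulas for $\Gamma^*_i$, $g^*_i$ and $m^*_i$ and the consistency of the induced partition are precisely what the equality discussion delivered.

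The step I expect to be the real obstacle is this reduction: recognizing that the $F$-dependent benchmark makes $\mathcal{R}(m,\cdot)$ concave so that only point-mass adversaries are relevant and the game becomes pure minimax regret, and then finding the substitution $y=1/\gamma$ under which the planner's problem turns into a nearest-point partition with regret weight $(y-y_i)^2/y$ — whose within-cell optimization is solved by a geometric mean and whose across-cell optimization collapses, via telescoping and averaging, to equalizing the increments $\sqrt{s_{j-1}}-\sqrt{s_j}$. After that, the rest is bookkeeping with the identity $h_{j-1}+h_{j+1}=2h_j$.
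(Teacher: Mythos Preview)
Your argument is correct and reaches the same conclusion as the paper, but by a genuinely different and more constructive route. The paper argues that in equilibrium the adversary must mix over at least $n+1$ point masses, that the only candidates are the partition boundaries $g_0,\dots,g_n$, and hence that the planner's menu must make the adversary \emph{indifferent} across $F_{g_0},\dots,F_{g_n}$. It then verifies that the stated menu satisfies these indifference conditions and proves uniqueness by a separate recursive monotonicity argument (two technical claims showing that, for fixed $g_0=a$ and fixed target level $\mathcal{R}$, the numbers $\Gamma_i$ and $g_i$ are uniquely and monotonically determined, so at most one $\mathcal{R}$ hits $g_n=b$).

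You instead collapse the game to a deterministic minimax-regret problem by observing that $\mathcal{R}(m,\cdot)$ is concave in $F$ (the benchmark is a supremum of $F$-linear functionals), so only point masses matter and the adversary's value against any menu is $-\max_{x}\rho(x,m)$. The substitution $y=1/\gamma$ then turns the per-cell regret into $(y-y_j)^2/y$, whose endpoint min--max equals $(\sqrt{s_{j-1}}-\sqrt{s_j})^2$; telescoping $\sum_j(\sqrt{s_{j-1}}-\sqrt{s_j})=1/\sqrt a-1/\sqrt b$ and the pigeonhole inequality give the lower bound $-\mathcal{R}^*$, with equality forcing $k=n$, equal increments, and $y_j=\sqrt{s_{j-1}s_j}$. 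This \emph{derives} the formulas for $\Gamma_i^*$ and $g_i^*$ rather than verifying them, and uniqueness falls out of the equality analysis without the paper's recursive Claims~1 and~2. The paper's route is more transparently game-theoretic (the indifference structure tells you what the adversary's equilibrium support looks like); yours is shorter, self-contained, and yields the value $\mathcal{R}^*$ and the optimal menu simultaneously.
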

For $n=1$, the result simplifies to $\Gamma^*_1=\Gamma^*_n=\sqrt{ab}$ as expected. The marginal risk types $g^*_i$ determine the partition into which agents sort themselves. Looking at the candidate for the planner's equilibrium utility loss $\mathcal{R}^*$, we see that it vanishes quadratically with $n$. Thus, already moderate values of $n$ substantially reduce the adversary's scope for harming the planner by picking an unfavorable distribution of risk types. Figure \ref{FIG:u} illustrates the robust strategy of Proposition \ref{nRCG} and the resulting partitions. Compared to the uniform distribution example in Figure \ref{FIG:example}, we see that the robust choices of strategies and partitions in the left panel are concentrated further to the left, i.e., there is a finer subdivision of the less risk averse types. Intuitively, the reason for this is that these types are more heterogeneous  in their preferences, i.e., the slope of the function $m^*(\gamma)$ is largest for small values of $\gamma$. Thus, an adversary who tries to create risk types whose preferences are not well-served by the current menu will tend to put more attention on less risk averse types. Conversely, we see in the right panel that the offered choice menus $m_i^*$ are more evenly spaced than in the uniform example of Figure \ref{FIG:example}.  

\begin{figure}[!h]
	\begin{subfigure}{.49\textwidth}
		\centering
		\includegraphics[width=0.9\linewidth]{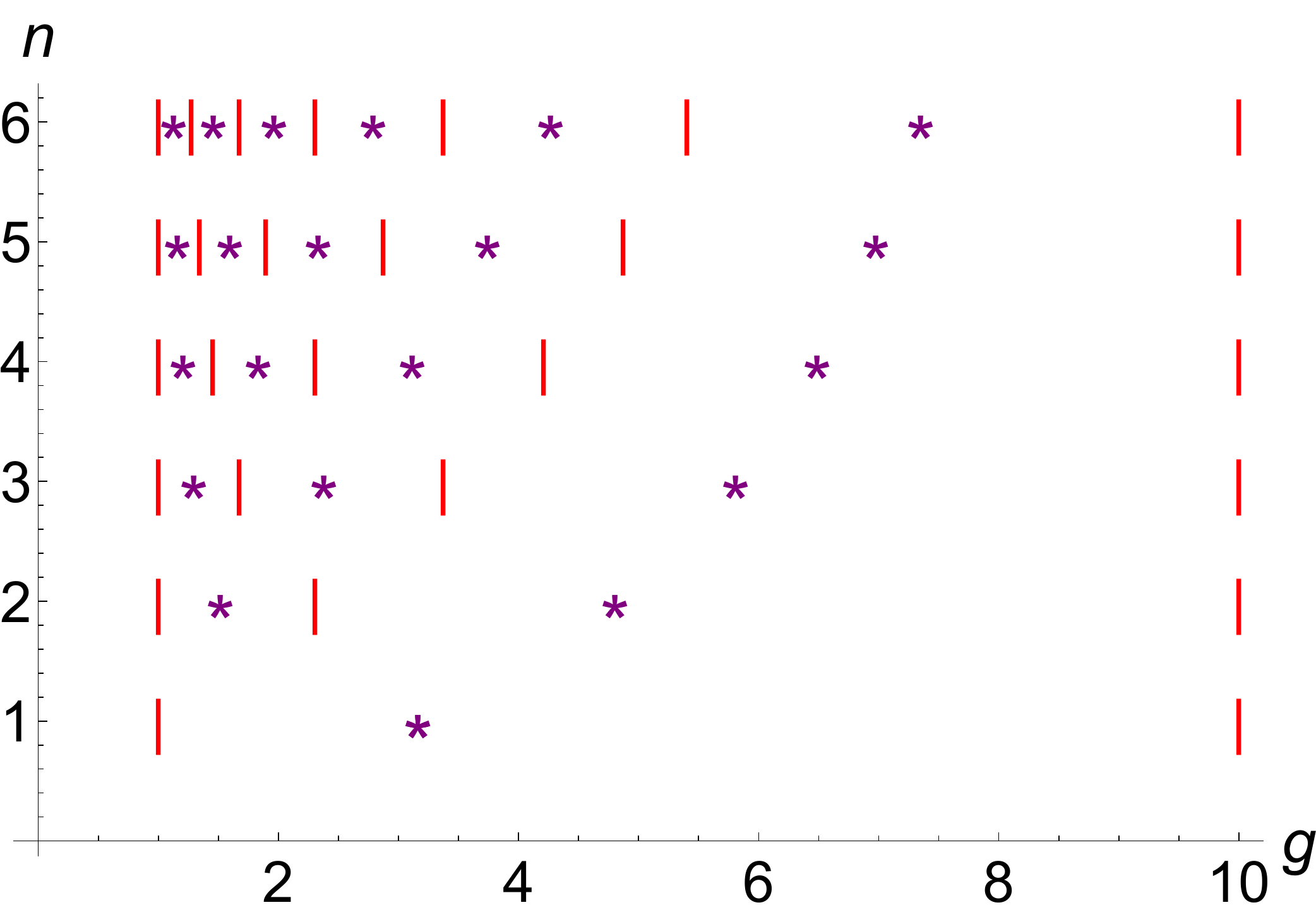}  
		\caption{$\Gamma_i^*$ (purple stars) and $g_i^*$ (red lines).}
		\label{FIG:gu}
	\end{subfigure}
\begin{subfigure}{.49\textwidth}
	\centering
	\includegraphics[width=0.9\linewidth]{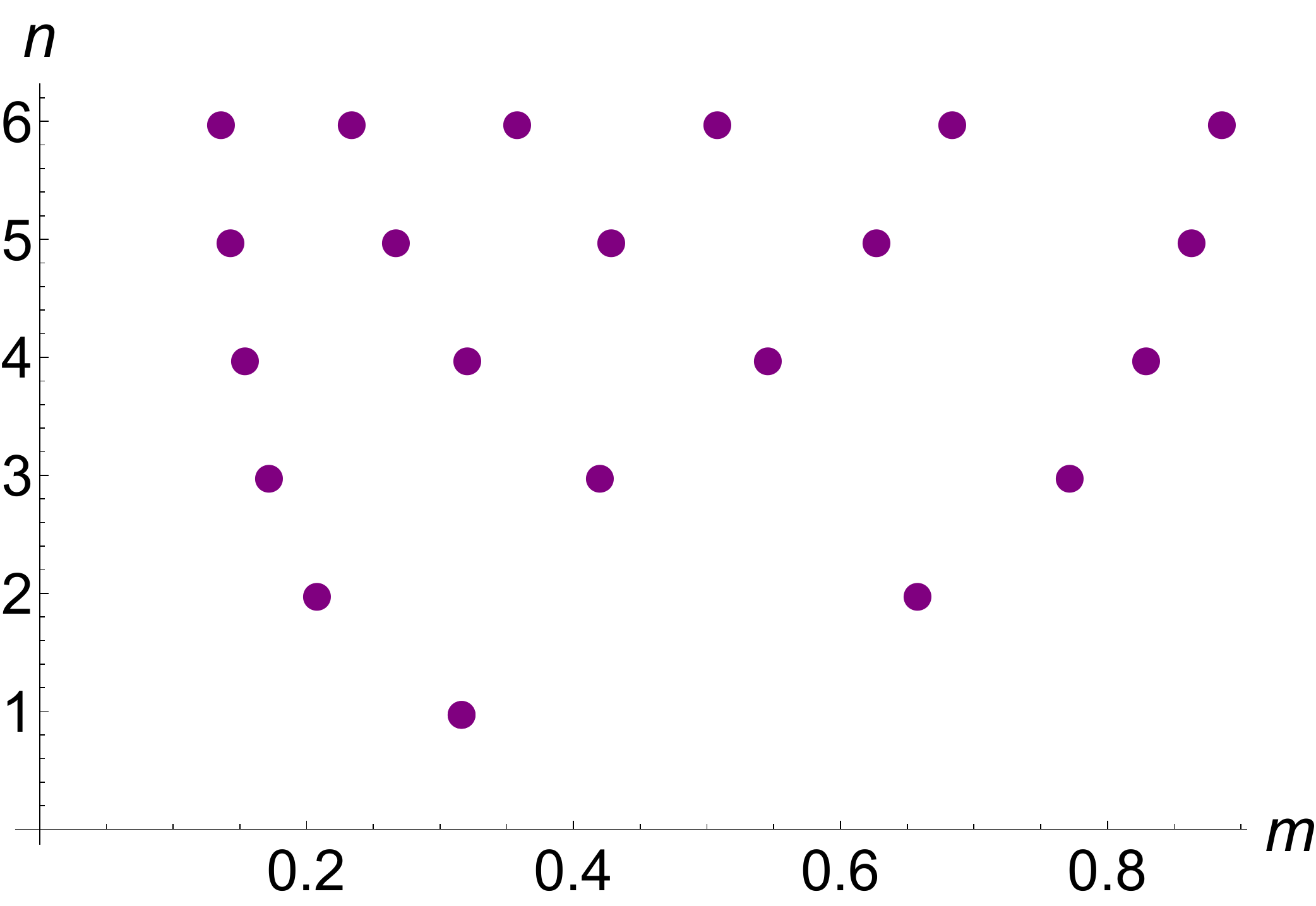}  
	\caption{$m_i^*$ (purple dots).}	
	\label{FIG:mu}
\end{subfigure}
	\caption*{ \footnotesize For varying $n$, the left panel shows the robust optimal targeted risk types $\Gamma_i^*$ and partition boundaries $g_i^*$  as given in Proposition \ref{nRCG}. The right panel shows the corresponding choice menus $m_i^*$. The parameters are $a=1$, $b=10$ and $(\mu-r)/\sigma^2=1$.}
	\caption{Robust strategies and partitions.}
	\label{FIG:u}
\end{figure}

\begin{remark}
Technically, the intuition behind the proposition is as follows. Suppose the planner would know that there are exactly $k\leq n$ risk types $\gamma_1,\ldots,\gamma_k$ that can arise from the adversary's strategy. Then the planner could  implement the menu $m^*(\gamma_1),\ldots, m^*(\gamma_k)$ and achieve $\mathcal{R}=0$, a perfect match between risk types and available choices. Thus, in order to be part of an equilibrium, the adversary's strategy must mix over at least $n+1$ different risk types so that the planner cannot offer a menu of perfect reactions. However, mixing over a set of risk types can only be optimal for the adversary if he is indifferent between them. This is the case if each of the risk types in the support of the adversary's strategy is a worst possible match for the decision menu offered by the planner. There are only $n+1$ candidate locations for such worst-possible matches. These candidates are the boundaries $a$ and $b$ and the points $g_i$ at which agents with the associated risk type are indifferent between adjacent strategies. From these considerations, we deduce the following \textit{indifference condition:} A menu of choices $m_1,\ldots,m_n$ can only be part of an equilibrium if the adversary is indifferent between the $n+1$ situations in which all agents have risk type $g_0,\ldots,g_n$. Here, the $g_i$ are the partition boundaries induced by the menu choices $m_i$ via the harmonic mean condition, augmented by $g_0=a$ and $g_n=b$. In the proof, we show that this indifference condition uniquely pins down the menu given in the proposition.\end{remark}

\begin{remark}
While we have not explicitly constructed an equilibrium, the proof of the proposition suggests what it would have to look like, giving some insight into possible strategies of the adversary. The numbers $\Gamma^*_i$ that determine the planner's strategy are chosen such that the adversary is indifferent between distributions of risk types that are concentrated in any of the numbers $g^*_i$, including the interval boundaries $a$ and $b$. He prefers these distributions over all others and can thus mix over them, randomly placing risk types in these locations such that the planner's strategy becomes a best response.\footnote{There is a small subtlety here. Agents with risk type $g^*_i$ are indifferent between $m^*_i$ and $m^*_{i+1}$. In order to stabilize the planner's behavior in equilibrium, the adversary needs to control the fraction of agents with type $g^*_i$ which pick either of these options. One can think of this as the adversary placing distinct atoms on $g_i^*-\varepsilon$ and $g_i^*+\varepsilon$. That tie-breaking rules need to be specified to ensure existence of equilibria is fairly common in games with continuous action space, see \cite{simon1990discontinuous}.} 
\end{remark}

Stepping outside the game-theoretic setting, we can also read Proposition \ref{nRCG} as a distribution-independent performance bound. As long as the planner follows the prescribed strategy, and as long as risk types are distributed within the interval $[a,b]$, the relative loss criterion $\mathcal{R}$ is bounded from below by the number $\mathcal{R}^*$ given in the proposition. The proposition thus gives a bound on the welfare loss from not knowing the distribution precisely when implementing a decision menu with $n$ choices. Moreover, it gives an explicit menu which achieves this bound. Combining Propositions \ref{bd2} and \ref{nRCG}, one can extend this to a bound which quantifies the welfare loss from implementing the robust $n$-element decision menu of Proposition \ref{nRCG} rather than fully personalized decisions. 

\paragraph{Comparative Statics.} We close this section with some further observations about the comparative statics of the partitions described by the numbers $g_i^*$ and $\Gamma_i^*$ defined in \eqref{defGamma} and \eqref{defg}. Clearly, when $n$ increases, the numbers $g_i^*$ and $\Gamma_i^*$ decrease as the partition becomes finer. For any fixed $n$, there exist increasing functions $\varphi_i$ and $\overline{\varphi}_i$ such that
\[
g_i^*= a \varphi_i\left(\frac{b}{a}\right)
\;\;\text{ and }\;\;
\Gamma_i^*= a \overline{\varphi}_i\left(\frac{b}{a}\right).
\]
Consequently, when $a$ and $b$ are multiplied by the same factor, the numbers $g_i^*$ and 
$\Gamma_i^*$ are multiplied by that factor as well. We next study the relative location 
\[
r^*_i(a,b)=\frac{g_i^*-a}{b-a}
\]
of $g_i^*$ within the interval $[a,b]$. The relative location $r^*_i$ lies between $0$ and $1$ and captures how much of the total distance between $a$ and $b$ lies between $a$ and $g_i^*$.
\begin{corollary}\label{corriab}
For fixed $n$ and $0< i < n$, the function $r^*_i(a,b)$ has the following properties:
\item[(i)] For any $\lambda>0$, $r^*_i(\lambda a,\lambda b)=r^*_i(a,b)$.
\item[(ii)] $\lim_{b\uparrow \infty} r^*_i(a,b)=0$.
\item[(iii)] $\lim_{a\downarrow 0} r^*_i(a,b)=0$.
\item[(iv)] $\lim_{b\downarrow a} r^*_i(a,b)=\frac{i}{n}$.
\end{corollary}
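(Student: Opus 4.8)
The plan is to work directly from the explicit formula \eqref{defg}, namely $g_i^* = ab/h_i^2$ with $h_i = \sqrt{a}\,\tfrac{i}{n} + \sqrt{b}\,\tfrac{n-i}{n}$, and substitute it into the definition $r^*_i(a,b) = (g_i^*-a)/(b-a)$. The cleanest route is to change variables to $s=\sqrt{a}$, $t=\sqrt{b}$, so that $h_i = s\tfrac{i}{n} + t\tfrac{n-i}{n}$ is linear in $(s,t)$, and
\[
g_i^* = \frac{s^2 t^2}{\left(s\tfrac{i}{n}+t\tfrac{n-i}{n}\right)^2},
\qquad
r^*_i = \frac{\dfrac{s^2 t^2}{h_i^2}-s^2}{t^2-s^2}
= \frac{s^2\left(t^2-h_i^2\right)}{h_i^2\,(t^2-s^2)}.
\]
Since $t^2 - h_i^2 = (t-h_i)(t+h_i)$ and $t - h_i = (t-s)\tfrac{i}{n}$, while $t^2-s^2=(t-s)(t+s)$, the factor $(t-s)$ cancels and one is left with a ratio of homogeneous expressions in $(s,t)$ of the same degree. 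This immediately makes part (i) transparent: replacing $(a,b)$ by $(\lambda a,\lambda b)$ replaces $(s,t)$ by $(\sqrt\lambda s,\sqrt\lambda t)$, and the resulting common factor $\lambda$ cancels between numerator and denominator, so $r^*_i$ depends only on the ratio $t/s=\sqrt{b/a}$, consistent with the functions $\varphi_i$ mentioned just before the corollary.

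For the three limits, I would express $r^*_i$ purely in terms of $\rho = t/s = \sqrt{b/a} \in [1,\infty)$ by dividing numerator and denominator by the appropriate power of $s$:
\[
r^*_i(a,b) = \frac{t+h_i}{(t+s)}\cdot\frac{s^2\,(t-h_i)}{h_i^2\,(t-s)}
= \frac{i}{n}\cdot\frac{s^2(t+h_i)}{h_i^2(t+s)}
= \frac{i}{n}\cdot\frac{\rho + \left(\tfrac{i}{n}+\rho\tfrac{n-i}{n}\right)}{\left(\tfrac{i}{n}+\rho\tfrac{n-i}{n}\right)^2(\rho+1)}.
\]
Then $b\uparrow\infty$ (with $a$ fixed) and $a\downarrow 0$ (with $b$ fixed) both correspond to $\rho\to\infty$; in that limit the numerator grows like $\rho$ while the denominator grows like $\rho^3$ (because $h_i/s \to \rho\tfrac{n-i}{n}$, which is order $\rho$ as long as $i<n$, and the extra factor $\rho+1$ contributes another power), so $r^*_i \to 0$, giving parts (ii) and (iii). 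For part (iv), $b\downarrow a$ means $\rho\to 1$; substituting $\rho=1$ gives $h_i/s\to 1$, and the bracketed expression collapses to $\tfrac{i}{n}\cdot\tfrac{1+1}{1\cdot 2} = \tfrac{i}{n}$.

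The only mild subtlety — and the one place to be careful rather than an outright obstacle — is that $r^*_i$ is a $0/0$ expression at $b=a$, so one must first perform the algebraic cancellation of the factor $(t-s)$ (equivalently $(\sqrt b-\sqrt a)$) exhibited above before taking the limit in part (iv); once that cancellation is done, $r^*_i$ extends continuously to $b=a$ and the value $i/n$ is obtained by direct substitution. The hypothesis $0<i<n$ is used precisely in parts (ii) and (iii): it guarantees $\tfrac{n-i}{n}>0$ so that $h_i$ genuinely grows linearly in $\rho$, which is what forces the denominator to dominate; for $i=0$ or $i=n$ one would have $g_i^*\in\{a,b\}$ and $r^*_i\in\{0,1\}$ identically, so these boundary cases are (correctly) excluded. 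Everything else is elementary manipulation of the closed-form expression, with no analytic input beyond continuity of rational functions away from their poles.
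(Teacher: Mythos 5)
Your proposal is correct and follows essentially the same route as the paper: both reduce $r^*_i$ via the closed form \eqref{defg} to an explicit function of the single ratio $b/a$ (which gives (i)) and then evaluate the limits, the only cosmetic difference being that you resolve the $0/0$ at $b=a$ by algebraically cancelling the factor $\sqrt{b}-\sqrt{a}$ whereas the paper applies L'Hospital's rule in the variable $k=b/a$. One negligible imprecision: under $(a,b)\mapsto(\lambda a,\lambda b)$ the common factor in your numerator and denominator is $\lambda^{3/2}$ rather than $\lambda$, but since both are homogeneous of the same degree in $(\sqrt{a},\sqrt{b})$ the scale-invariance argument stands.
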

By definition $r^*_0(a,b)=0$ and $r^*_n(a,b)=1$ as $g_0^*=a$ and $g_n^*=b$. 
Property (i) reflects again the fact that scaling both $a$ and $b$ by the same factor just scales up the entire partition. Properties (ii) and (iii) consider situations where the heterogeneity in possible risk types $b/a$ goes to infinity, either because $b$ gets large or because $a$ gets small. In this case, the entire partition gets more and more concentrated at the lower, least risk averse type $a$.
In case (iii), the limiting partition for $a=0$ is degenerate  with all  boundaries except for $g_n^*=b$ converging to $a$. In contrast, in case (ii), partition boundaries $g_i^*$ converge to $\infty$ with $b$ but do so very slowly so that, in relative terms, they move closer to the fixed lower boundary $a$. Finally, in case (iv) where heterogeneity vanishes, $b\approx a$, we see that  
$r^*_i(a,b)$ converges to $i/n$, corresponding to an evenly spaced partition of the interval $[a,b]$. Analogous results hold for the relative locations $\rho^*_i(a,b)$ of the targeted risk types $\Gamma_i^*$, 
\[
\rho^*_i(a,b)=\frac{\Gamma_i^*-a}{b-a}.
\]
There are only two small differences compared to Corollary \ref{corriab}. First, $\rho^*_n(a,b)$ also converges to $0$ when $b$ goes to infinity or when $a$ goes to zero. Second, one can show that
$$\lim_{b\downarrow a} \rho^*_i(a,b)=\frac{i-\frac{1}2}{n}.$$
Thus, in the limit of vanishing heterogeneity, the numbers $\Gamma_i^*$ and $g_i^*$ together form an evenly spaced partition of the interval $[a,b]$ into $2n$ subintervals of length $1/(2n)$.

\section{Dynamic Investment with Multiple Assets}\label{secdynamic}
So far, our analysis of the planner's  decisions was largely a static one. In this section, we  explain how to embed it into a dynamic investment problem in the spirit of the classical Merton problem.\footnote{See \cite{merton1971optimum} for the origin and \cite{rogers2013optimal} for a recent textbook treatment.} One celebrated result in that setting is the two-fund separation theorem: In a market where all assets are geometric Brownian motions, all agents regardless of their risk preferences optimally split their investments between the risk-free asset and a fixed portfolio of the risky assets which is identical for all agents, the so-called tangency portfolio. We will show that in our setting, the optimality of two fund separation is inherited by the planner's preferences. Consequently, our previous analysis can be interpreted in the context of a multi-asset Merton investment problem. The univariate parameter $m$ becomes the fraction of wealth that is invested into the tangency portfolio. 

Throughout this section, we simplify the analysis by restricting attention to deterministic, time-dependent strategies that prescribe at every instant a fraction of wealth that is invested into the risky asset. We discuss this assumption further in the final part of this section, touching also upon the issue of time-inconsistency of the planner's preferences. 

\paragraph{Dynamic investment with a single asset.} We begin with the case of investment into a single risky asset $S$ over the time horizon $[0,T]$. The asset follows a geometric Brownian motion with drift $\mu$ and volatility $\sigma$ so that its evolution can be described by the stochastic differential equation (SDE)
\[
dS_t =\mu S_t dt +\sigma S_t dW_t
\]
where $W_t$ is a standard Brownian motion. Besides the risky asset, there is a risk-free asset with constant rate of return $r<\mu$. We denote by $V_t$ the wealth process that arises from investing at time $t$ a deterministic time-dependent fraction $m_t$ of wealth into the risky asset and the remainder into the risk-free asset. Its evolution is described by the SDE 
\[
dV_t= (r +m_t(\mu-r)) V_t dt +m_t\sigma V_t dW_t.
\]
With an initial wealth of $V_0=1$, it follows that wealth at time $T$ is given by
\begin{equation}\label{VT}
V_T=\exp\left( 
rT + (\mu-r) \int_0^T m_t  dt - \frac{ 1}{2}\sigma^2\int_0^T m_t^2 dt+ \sigma\int_{0}^T m_t  dW_t
\right).
\end{equation}
Consider a planner who chooses the strategy $m_t$, facing a population of power utility agents like in our static baseline model. We will argue below that it is optimal for such a planner to choose only between time-constant strategies $m$. With a time-constant $m$, terminal wealth $V_T$ can then be written as  
\[
V_T=\exp\left( 
rT + (\mu-r) m T - \frac{ 1}{2}\sigma^2 m^2 T+ \sigma m  W_T
\right).
\]
Since $W_T$ is normally distributed with mean $0$ and variance $T$, it follows that $V_T$ has the same distribution and same dependence on $m$ as the outcome quantity $R(m,Z)$ in our static baseline model. Consequently, the entire analysis of the static problem applies to the dynamic problem as well. To justify our focus on time-constant strategies, we consider the certainty equivalent of $V_T$ given in \eqref{VT} for an agent with risk type $\gamma>0$,
\[
u_\gamma^{-1}(E[u_\gamma(V_T)])=
\exp\left(
r T + (\mu-r) \int_0^T m_t  dt - \frac12  \sigma^2 \gamma \int_0^T m_t^2 dt
\right).
\]
Consider some strategy $(m_t)$ with an average investment fraction of $\kappa=\frac{1}{T}\int_0^T m_t  dt$. One can easily show that the constant strategy $m_t\equiv \kappa$ minimizes $\int_0^T m_t^2 dt$ among all strategies with average investment fraction $\kappa$. It follows that $m_t\equiv \kappa$ maximizes $u_\gamma^{-1}(E[u_\gamma(V_T)])$ among all 
strategies with average investment fraction $\kappa$. This holds regardless of the value of $\gamma$, i.e., given a fixed average investment fraction, all risk types agree on the best possible strategy and prefer the time-constant one. The time-constant strategy Pareto dominates all other strategies with the same investment fraction. Consequently, the planner can restrict attention to strategies which prescribe constant investment fractions over time. This shows that the dynamic problem can be reduced to a static one. 

\paragraph{Multiple assets.} We next argue that the investment problem with $d$ risky assets can also be reduced to the univariate static problem with payoff \eqref{RmZ}. Again, the basic argument is to rule out strategies that are dominated from the perspective of all risk types. We show that among all strategies that reach a given rate of return, all risk types prefer a strategy that is a multiple of the tangency portfolio. Thus, by Pareto dominance, the multi-asset investment problem can be reduced to a single asset problem where the single asset is the tangency portfolio. We assume that our $\mathbb{R}^d$-valued stock price process $S_t$ follows the SDE 
\[
dS_t = \textnormal{diag}(S_t)(\bar{\mu} dt + \bar{\sigma} dW_t).
\]
Here, $\bar{\mu}$ is a vector in $\mathbb{R}^d$ with $\bar{\mu}_i>r$, $\bar{\sigma}\in \mathbb{R}^{d\times d}$ is an invertible matrix, $W_t$ is a $d$-dimensional standard Brownian motion, and $\textnormal{diag}(S_t)$ denotes the $d\times d$
diagonal matrix with diagonal entries $S_t$. We denote by $\bar{m}$ a (time-constant) $d$-vector of fractions of wealth invested into the risky assets. The implied dynamics of the wealth process is given by 
\[
dV_t= V_t(r+ \bar{m}^\top (\bar{\mu}-r\iota)   )dt +V_t \bar{m}^\top  \bar{\sigma} dW_t
\]
where $\iota$ denotes the all-ones vector in $\mathbb{R}^d$ and $\top$ denotes matrix transposition. The certainty equivalent of an agent with risk type $\gamma$ is then given by
\[
u_\gamma^{-1}(E[u_\gamma(V_T)])=
\exp\left(
r T +   \bar{m}^\top(\bar{\mu}-r\iota) T - \frac12  \gamma \bar{m}^\top \bar{\sigma}\bar{\sigma}^\top \bar{m} T
\right).
\]
Solving the linear-quadratic problem in the exponent, 
\begin{equation}\label{logce}
\sup_{\bar{m}} \bar{m}^\top(\bar{\mu}-r\iota)  - \frac12  \gamma \bar{m}^\top \bar{\sigma}\bar{\sigma}^\top \bar{m},
\end{equation}
it follows that the individually optimal strategy of an agent with risk type $\gamma$ is given by 
\[
\bar{m}^*(\gamma)=\frac{1}{\gamma}\bar{m}^\tau \;\;\text{ where }\;\; \bar{m}^\tau=(\bar{\sigma}\bar{\sigma}^\top)^{-1}(\bar{\mu}-r\iota).
\]
The vector $\bar{m}^\tau$ is called the tangency portfolio. We will show that to solve the planner's problem it is sufficient to consider strategies which are multiples of the tangency portfolio, i.e., to restrict attention to vectors of the form
\[
\bar{m} = c\,\bar{m}^\tau = c (\bar{\sigma}\bar{\sigma}^\top)^{-1}(\bar{\mu}-r\iota)
\]
where $c$ is a positive scalar. From this claim, it follows that we can write
\[
u_\gamma^{-1}(E[u_\gamma(V_T)])=
\exp\left(
r T +   c\, k T - \frac12  \gamma c^2 k T
\right)
\]
where $k$ is given by 
\[
k=(\bar{\mu}-r\iota)^\top(\bar{\sigma}\bar{\sigma}^\top)^{-1}(\bar{\mu}-r\iota)>0.
\]
Thus, finding the optimal scalar $c$ is equivalent to finding the optimal investment fraction $m$ in the single asset case, i.e., the multi-asset problem collapses to the problem of Lemma \ref{lem1} with $m=c$, $\mu-r=k$ and $\sigma^2=k$. To show that we can restrict attention to multiples of $\bar{m}^\tau$, we consider the constrained maximization of the log-certainty equivalent
\begin{equation}\label{logcek}
\sup_{\bar{m}} \bar{m}^\top(\bar{\mu}-r\iota)  - \frac12  \gamma \bar{m}^\top \bar{\sigma}\bar{\sigma}^\top \bar{m}\;\; \text{s.t.}\;\;\bar{m}^\top(\bar{\mu}-r\iota)=k 
\end{equation}
for some positive $k$. Solving this problem by Lagrangian optimization boils down to subtracting a multiple $\Lambda$ of 
the first term $\bar{m}^\top(\bar{\mu}-r\iota)$ from the objective,
\[\sup_{\bar{m}} (1-\Lambda )\bar{m}^\top(\bar{\mu}-r\iota)  - \frac12  \gamma \bar{m}^\top \bar{\sigma}\bar{\sigma}^\top \bar{m} 
\]
Varying the Lagrange multiplier $\Lambda$ is thus equivalent to varying the risk aversion coefficient $\gamma$ in \eqref{logce}, the unconstrained version of \eqref{logcek}. In particular, since the solution to \eqref{logce} is a multiple of $\bar{m}^\tau$ for any $\gamma$, this property is inherited by the constrained version \eqref{logcek}. Among strategies with a fixed rate of return $k>0$, all risk types prefer the multiple of the tangency portfolio with return $k$ over all alternative strategies. Thus, multiples of the tangency portfolio are Pareto dominant and the planner can restrict attention to them.  

\paragraph{Initial Wealth.}

From a finance perspective, our assumption of unit initial wealth may seem restrictive. However, it can easily be relaxed when the planner has a power utility function with parameter $\eta$. Denote by $V_0(\gamma)>0$ the (total) initial wealth of agents with risk type $\gamma\in [a,b]$. Then we can write the certainty equivalent for risk type $\gamma$ as
\[
\CE(\gamma, m, V_0(\gamma))=u_{\gamma}^{-1}(E[u_{\gamma}(V_0(\gamma) R(m,Z))])=
 V_0(\gamma) \CE(\gamma, m) 
\]
where $\CE(\gamma, m)$ is the certainty equivalent with unit initial wealth as before. For logarithmic utility of the planner, $\eta=1$, it follows immediately that 
\[
\EE[\log(\CE(\gamma, m(\gamma), V_0(\gamma) )])= \EE[\log(\CE(\gamma, m(\gamma))])] +
\EE[\log(V_0(\gamma))]). 
\]
Since the first summand does not depend on $V_0$ and the second summand does not depend on $m$, the planner's optimization problem is not affected by the distribution of initial wealth. A planner with logarithmic utility just optimizes the population average of the certainty equivalent growth rate without taking into account how wealth varies with risk type. For a power utility planner with $\eta\neq 1$, the objective of maximizing
\[
\EE[v(\CE(\gamma, m(\gamma), V_0(\gamma)))]
\]
is, up to increasing linear transformations, equivalent to maximizing
\[
\EE[V_0(\gamma)^{1-\eta} v(\CE(\gamma, m(\gamma)))] \;\;\;\;\text{and}\;\;\;\;
\widetilde{\EE}[v(\CE(\gamma, m(\gamma)))]
\]
where $\widetilde{\EE}$ is an expected value with respect to a reweighted density
\[
\widetilde{f}(g)=\frac{1}{\EE[V_0(\gamma)^{1-\eta}]}\,V_0(g)^{1-\eta} f(g).
\]
Thus, up to a reweighting of $f$, our   analysis also applies with heterogeneous initial wealth. 

It is instructive to study the distortion that occurs in going from  $f$ to $\widetilde{f}$. For $\eta=0$, the inequality-neutral planner applies a simple weighting by initial wealth. For $\eta \in (0,1)$ agents with larger initial wealth still receive a larger weight in the planner's objective. The logarithmic planner, $\eta=1$, applies no distortion $f \equiv \widetilde{f}$. Finally, for $\eta >1$ the planner is so inequality averse that he aims at redistribution, giving more weight to the risk preferences of poorer types and less weight to types who already have a lot of money.

\paragraph{Time-Consistency.}

In Lemmas \ref{OS1} and \ref{OS2}, we saw that optimal decisions in our problem may depend on the length of the time horizon $T$. Consequently, the planner's problem is not time-consistent in general. If the planner reconsiders his decision at a later date, with a shorter remaining time horizon $T$, he will prefer a different choice of $m$. Thus, at each time point, the planner wishes to commit on a time-constant $m$ for the remaining planning horizon -- but the optimal level of $m$ evolves over time. 

The optimal decisions we characterize are thus pre-commitment strategies in the sense of \cite{strotz1955myopia}. They are only feasible if the planner has the power to commit on sticking with his decisions and not revising them. Besides the pre-commitment strategy, \cite{strotz1955myopia} also introduces the notion of a sophisticated strategy under which the planner optimizes his current objective taking into account that his future selves will do the same. For the special case of a linear $v$, $\eta=0$, such sophisticated strategies have been derived  by \cite{DesmettreSteffensen} for a collective investment problem similar to ours. 

Both the pre-commitment and the sophisticated solution have their merits, depending on the context and, in particular, on the plausibility of the commitment assumption. Given that we mostly think of our planner as acting on behalf of others, assuming that the planner can credibly commit on a certain investment strategy may be comparatively plausible. For instance, the strategy might be formalized in a contract that the planner makes with the agents at the beginning of the investment horizon. 

When we consider logarithmic utility for the planner like in Sections \ref{secbounds} and \ref{secrobust}, all complications of time-inconsistency vanish because the optimal decisions $m$ do not depend on $T$. In this case, the pre-commitment and sophisticated solutions coincide. An important consequence is that, intuitively, the restriction to deterministic strategies is also without loss of generality then by classical arguments:  The sophisticated strategy can be computed backwards in time by dynamic programming. At every instant, current wealth is merely a multiplicative factor which does not influence optimal investment due to the power utilities of the agents and the planner. Thus, the optimal sophisticated strategy will be deterministic. When the time-consistency problems disappear like in the logarithmic case, this property is inherited by our pre-commitment strategy.

\appendix
\section{Proofs}

\begin{proof}[Proof of Lemma \ref{lem1}]
Since $u$ is a power utility function, we can write
\begin{align*}
&u^{-1}(E[u(R(m,Z))])
=\exp\left(rT+(\mu-r)mT-\frac{1}2 \sigma^2 m^2T\right) E\left[
\exp\left(
m\sigma Z \sqrt{T} (1-\gamma)
\right)
\right]^{\frac{1}{1-\gamma}}\nonumber
\end{align*}
pulling a deterministic factor outside of the certainty equivalent. Since $Z$ is standard normal, we know that $E[\exp(\theta Z)]=\exp(\theta^2/2)$ for any  $\theta$ and thus\[
E\left[
\exp\left(
m\sigma Z \sqrt{T} (1-\gamma)
\right)
\right]^{\frac{1}{1-\gamma}} = \exp\left(\frac12m^2\sigma^2 T(1-\gamma) \right).
\]
This is the claimed formula for the certainty equivalent. As a monotonic transformation of a quadratic polynomial, it has a unique maximizer in $m$ as stated in the lemma. 
\end{proof}

\begin{proof}[Proof of Lemma \ref{OS1}]
The planner maximizes the smooth function $O(m)= \EE[v(\CE(\gamma,m))]$. Taking the derivative with respect to $m$ yields
\begin{align*}
O'(m)&= \EE\left[v'(\CE(\gamma,m))\CE(\gamma,m) \left(\mu-r-m\sigma^2 \gamma  \right)T\right]
= \EE\left[h(\gamma,m) \left(\mu-r-m\sigma^2 \gamma  \right)\right]T
\end{align*}
where the function $h$ is positive by our assumptions on $v$. The first order condition $O'(m)=0$ can thus be written as $m=\Phi(m)$ where
\[
\Phi(m)=\frac{\mu-r}{\sigma^2 \Gamma(m)} \;\;\text{ with }\;\;\Gamma(m)=\frac{\EE[\gamma h(\gamma,m)]}{\EE[h(\gamma,m)]}.
\]
Since $\Gamma(m)$ is the expected value of $\gamma$ after a change of measure which preserves the support $[a,b]$, we have $\Gamma(m) \in [a,b]$ and thus $\Phi(m) \in [m^*(b),m^*(a) ]$ for all $m$. Since $m$ can take any positive value, the equation $m=\Phi(m)$ must thus have at least one solution and all solutions must lie in the interval $[m^*(b),m^*(a) ]$. Moreover, $m<\Phi(m)$ for sufficiently small $m$, $m< m^*(b)$, and $m>\Phi(m)$ for sufficiently large $m$, $m> m^*(a)$. This implies $O'(m)>0$ for small $m$ and $O'(m)<0$ for large $m$. Since it is smooth by our assumption, the function $O$ must thus attain an interior global maximum somewhere in the interval $[m^*(b),m^*(a) ]$ and that maximum must satisfy the first order condition $O'(m)=0$.
\end{proof}

\begin{proof}[Proof of Lemma \ref{OS2}]
Compared to the situation in Lemma \ref{OS1}, we now have an explicit utility function which implies an explicit change of measure, $h(\gamma,m) = \CE(\gamma,m)^{1-\eta}$. For $\eta=1$, we thus get  $h(\gamma,m)=1$, implying that $\Gamma=\EE[\gamma]$ does not depend on $m$. The function $\Phi(m)$ is thus constant and intersects the identity function $m$ exactly once. This proves (i). The formulation of the first order condition in the lemma, follows after noting that by $h(\gamma,m) = \CE(\gamma,m)^{1-\eta}$ and Lemma \ref{lem1}
\[
\frac{h(\gamma,m)}{\EE[h(\gamma,m)]} = \frac{\exp(\gamma \theta(m))}{\EE[\exp(\gamma \theta(m))]}
\]
where $\theta(m)=\frac{1}{2}\sigma^2(\eta-1)T m^2$. To conclude the proof, we rely on the fact that the function $\psi:\mathbb{R} \rightarrow [a,b]$
\[
\psi(t)= \EE\left[ \gamma \; \frac{\exp(\gamma t)}{\EE[\exp(\gamma t)]} \right]
\]
is increasing in $t$ with $\psi(0)= \EE[\gamma]$. To see this, note that 
\[
\psi'(t)= \EE\left[ \gamma^2 \; \frac{\exp(\gamma t)}{\EE[\exp(\gamma t)]} \right] - \EE\left[ \gamma \; \frac{\exp(\gamma t)}{\EE[\exp(\gamma t)]} \right]^2
\]
is positive as it is the variance of $\gamma$ after a change of measure proportional to $\exp(\gamma t)$. We now write
\[
\Phi(m)=\frac{\mu-r}{\sigma^2 \psi( \theta(m) )}
\]
and note that for $\eta <1$ the function $\theta(m)$ is decreasing with $\theta(0)=0$. It follows that $\psi( \theta(m) ) \leq \EE[\gamma]$ so $\Phi(m) \in [m^*(\EE[\gamma]), m^*(a)]$. This shows (iii). The converse argument, using that for $\eta >1$ the function $\theta(m)$ is increasing with $\theta(0)=0$, shows most of (ii). It remains to argue that the equation $m=\Phi(m)$ has a unique solution in this case. To this end, note that $\psi(\theta(m))$ is now increasing, so $\Phi(m)$ is decreasing. Since the decreasing function $\Phi(m)$ can intersect the increasing identity function only once, it follows that $m=\Phi(m)$ has a unique solution. 
\end{proof}

\begin{proof}[Proof of Lemma \ref{lemHM1}]
We can write the planner's objective as
\[
O(g_0,\ldots,g_n)=\sum_{i=1}^n U(g_{i-1},g_{i},m^*(g_{i-1},g_i))
\]
where 
\begin{equation}\label{defU}
U(\alpha,\beta,m)=\int_\alpha^\beta v(\CE(g,m)) f(g) dg
\end{equation}
for $\alpha,\beta \in [a,b]$ and $m\in \mathbb{R}$ and where the $g_i$ satisfy $a=g_0< g_1\ldots<g_n=b$. We prove the lemma by showing that the harmonic mean condition is equivalent to the first order condition $\frac{\partial O}{\partial g_i } =0$ for all $i$ with $0<i<n$. To this end, note first that due to the optimality of $m^*$ the partial derivatives with respect to $m$ vanish,
\[
\frac{\partial U(g_{i-1},g_i, m^*(g_{i-1},g_i) )}{\partial m} =0.
\]
We can thus write our first order condition as
\[
0= \frac{\partial O(g_0,\ldots,g_n)}{\partial g_i }
= \frac{\partial U(g_{i-1},g_i, m^*(g_{i-1},g_i) )}{\partial \beta} + \frac{\partial U(g_i,g_{i-1}, m^*(g_{i},g_{i+1}) )}{\partial \alpha}.
\]
By \eqref{defU} and the monotonicity of $v$, this condition is the same as 
\begin{equation}\label{CECEcond}
0 = \CE(g_i,m^*(g_{i-1},g_{i})) f(g_i)  - \CE(g_i,m^*(g_{i},g_{i+1})) f(g_i). 
\end{equation}
With $m_{i-1}=m^*(g_{i-1},g_{i})$ and  $m_{i}=m^*(g_{i},g_{i+1})$, this condition becomes,
by  Lemma \ref{lem1}, 
\[
(\mu-r) m_{i-1}-\frac{1}{2} g_i m_{i-1}^2 \sigma^2 = (\mu-r) m_{i}-\frac{1}{2} g_i m_{i}^2 \sigma^2.
\]
Plugging in 
$$m_{i-1}=\frac{\mu-r}{\sigma^2 g^*(m_{i-1})}\;\; \text{ and }\;\;
m_{i}=\frac{\mu-r}{\sigma^2 g^*(m_{i})},$$ 
 this condition can be rewritten into 
\[
\frac{1}{g^*(m_{i-1})}- \frac{1}{2} \frac{g_i}{g^*(m_{i-1})^2}.
=
\frac{1}{g^*(m_i)}- \frac{1}{2} \frac{g_i}{g^*(m_i)^2}.
\]
Solving this equation for $g_i$ and simplifying gives the harmonic mean condition
\[
g_i = \frac{2}{\frac{1}{g^*(m_i)} + \frac{1}{g^*(m_{i-1})}}
\]
\end{proof}

\begin{proof}[Proof of Lemma \ref{lemHM2}]
The proof of the lemma is contained in the one of Lemma  \ref{lemHM1}. It suffices to note that \eqref{CECEcond} is equivalent to the indifference condition of risk types at the boundary,
$$\CE(g_i,m^*(g_{i-1},g_{i})) = \CE(g_i,m^*(g_{i},g_{i+1})).$$
\end{proof}

\begin{proof}[Proof of Lemma \ref{lemCEgrowth}]
Plugging 
\[
m(\gamma)=\frac{\mu-r}{\sigma^2 G(\gamma)}
\]
into 
\[
\frac{1}{T}\log(\CE(\gamma,m(\gamma)))=r+m(\gamma)(\mu-r)-\frac{1}{2}\gamma m(\gamma)^2\sigma^2 
\]
and applying $\EE$ yields
\[
\frac{1}{T}\EE\left[\log(\CE(\gamma,m(\gamma)))\right]=r+\frac12 \left(\frac{\mu-r}{\sigma}\right)^2  \EE\left[ \frac{2}{G(\gamma)}-\frac{\gamma}{G(\gamma)^2}
\right]
\]
as claimed. 
\end{proof}

\begin{proof}[Proof of Lemma \ref{lemRepr}]
For a given $n$-element partition $a=g_0<\ldots < g_n=b$ we define the function $\overline{G}(\gamma)$ via $\overline{G}(\gamma)=M_i/P_i$ for $\gamma \in [g_{i-1},g_i)$ where
\[
M_i= \int_{g_{i-1}}^{g_{i}} g f(g) dg
\;\;\text{ and }\;\;
P_i= \int_{g_{i-1}}^{g_{i}} f(g) dg
\]
Thus, in line with the logarithmic utility case in Lemma \ref{OS2}, we set the decision that is applied for risk types in $[g_{i-1},g_i)$ equal to the optimal decision for the mean risk type in the interval. Since maximizing $E_n^*$ is equivalent to maximizing the planner's objective, we know that $E_n^*$ can be written as
\[
E_n^*\; = \sup_{a=g_0<\ldots < g_n=b } \; \EE\left[\frac{2}{\overline{G}(\gamma)} -\frac{\gamma}{\overline{G}(\gamma)^2} \right]. 
\]
We complete the proof by showing that for any fixed partition
\begin{equation}\label{PiMieq}
\EE\left[\frac{2}{\overline{G}(\gamma)} -\frac{\gamma}{\overline{G}(\gamma)^2} \right]
=\sum_{i=1}^n \frac{P_i^2}{M_i}.
\end{equation}
Since $P_i=\PP\left(\gamma \in [g_{i-1},g_{i}] \right)$ and 
$\EE\left[\gamma \left| \gamma \in [g_{i-1},g_{i}]\right.\right]=M_i/P_i$, \eqref{PiMieq} immediately implies \eqref{eqRepr}. To see \eqref{PiMieq}, we plug in the definition of 
$\overline{G}$ on the left hand side to obtain
\begin{align*}
\EE\left[\frac{2}{\overline{G}(\gamma)} -\frac{\gamma}{\overline{G}(\gamma)^2} \right]
&=\sum_{i=1}^n \int_{g_{i-1}}^{g_i} \left(\frac{2 P_i}{M_i}-\frac{P_i^2}{M_i^2} \, g \right)f(g) dg
=\sum_{i=1}^n 2\frac{P_i^2}{M_i}-\frac{P_i^2}{M_i}
\end{align*}
using the linearity of the integral and the definitions of $P_i$ and $M_i$.
\end{proof}

\begin{proof}[Proof of Lemma \ref{bd1} and Remark \ref{rembd1}]
Denote by $\varphi(g)=\frac{a+b-g}{ab}$ the linear function which connects the points $(a,1/a)$ and $(b,1/b)$. Since the map $g \mapsto 1/g$ is convex, we have $1/g \leq \varphi(g)$ for all $g \in [a,b]$. In particular, since $\gamma$ has support $[a,b]$ we have the upper bound
\[
\EE\left[\frac{1}{\gamma}\right]\leq \EE\left[\varphi(\gamma)\right] 
=   \frac{(a+b-\EE[\gamma]) \EE[\gamma]}{ab\; \EE[\gamma]}\leq 
\frac{(a+b)^2}{4 ab}\frac{1}{\EE[\gamma]} 
=\frac{\frac{a}{b}+\frac{b}{a}+2}{4} \frac{1}{\EE[\gamma]} 
\]
where the second inequality uses that the expression $(a+b-z)z$ is a quadratic polynomial in $z$ which is maximal for $z=(a+b)/2$. Replacing $\EE[\gamma]$ in the numerator of the fraction by this maximizer gives the upper bound. Finally, to see the claim in Remark \ref{rembd1}, note that the first inequality is sharp if $\gamma$ takes only the two values $a$ and $b$ and that the second inequality is sharp if $\EE[\gamma]=(a+b)/2$. Thus, the inequality becomes an equality iff $\gamma$ takes values $a$ and $b$ with equal probability. 
\end{proof}

\begin{proof}[Proof of Proposition \ref{bd2}]
Denote by $\bar{g}_0,\ldots, \bar{g}_n$ the geometric partition of $[a,b]$. This partition is defined by $\bar{g}_0=a$ and $\bar{g}_i=(b/a)^{1/n} \bar{g}_{i-1}$.  Our goal is to show 
\begin{equation}\label{bdgoal}
E^*_\infty= \EE \left[\frac{1}{\gamma}\right] \leq C_n \sum_{i=1}^n \frac{\PP\left(\gamma \in [\bar{g}_{i-1},\bar{g}_{i}] \right)}{\EE\left[\gamma \left| \gamma \in [\bar{g}_{i-1},\bar{g}_{i}]\right.\right]}
\end{equation}
with 
\[
C_n=\frac{\left(\frac{b}{a}\right)^{\frac1n}+\left(\frac{a}{b}\right)^{\frac1n}+2 }{4}
\]
for this particular partition. The desired inequality then follows from 
\[
\sum_{i=1}^n \frac{\PP\left(\gamma \in [\bar{g}_{i-1},\bar{g}_{i}] \right)}{\EE\left[\gamma \left| \gamma \in [\bar{g}_{i-1},\bar{g}_{i}]\right.\right]} 
\leq 
\sup_{a=g_0<\ldots < g_n=b
} \;
\sum_{i=1}^n \frac{\PP\left(\gamma \in [g_{i-1},g_{i}] \right)}{\EE\left[\gamma \left| \gamma \in [g_{i-1},g_{i}]\right.\right]} =E_n^*.
\]
Note that the geometric partition has the property that for 
 all of the intervals $[\bar{g}_{i-1},\bar{g}_{i}]$ the ratio of lower and upper interval boundary is $(b/a)^{1/n}$. This implies that we can apply Lemma \ref{bd1} to the distribution of $\gamma$ conditional on $\gamma \in [\bar{g}_{i-1},\bar{g}_{i}]$
and obtain 
\begin{align}\label{bd1bd}
\EE\left[\left. \frac{1}{\gamma}\right| \gamma \in [\bar{\gamma}_{i-1},\bar{\gamma}_{i}] \right] \leq C_n 
\frac{1}{\EE\left[\left. {\gamma}\right| \gamma \in [\bar{\gamma}_{i-1},\bar{\gamma}_{i}] \right] }.
\end{align}
To show \eqref{bdgoal}, we thus apply the law of iterated expectations and then \eqref{bd1bd},
\begin{align*}
\EE \left[\frac{1}{\gamma}\right] 
&=\sum_{i=1}^n 
\PP\left(\gamma \in [\bar{g}_{i-1},\bar{g}_{i}] \right)
\EE\left[\left. \frac1{\gamma} \right| \gamma \in [\bar{g}_{i-1},\bar{g}_{i}]\right]
\leq C_n \sum_{i=1}^n \frac{\PP\left(\gamma \in [\bar{g}_{i-1},\bar{g}_{i}] \right)}{\EE\left[\gamma \left| \gamma \in [\bar{g}_{i-1},\bar{g}_{i}]\right.\right]}.
\end{align*}
\end{proof}

\begin{proof}[Proof of Corollary \ref{corbd}]
Using that $a/b\leq 1$, we obtain from Proposition \ref{bd2} the inequality $E_\infty^* \leq R_n E_n^*$ where $R_n  = \frac14(\left(\frac{b}a\right)^\frac1n+3)$. 
This implies $E_\infty^* \leq R E_n^*$ for all $R\geq R_n$. Solving the condition $R\geq R_n$ for $n$ gives the desired condition on $n$, where we note that $R\geq 1$ implies $\log(4R-3) \geq 0$. \end{proof}

\begin{proof}[Proof of Lemma \ref{ACG}]
Using Lemma \ref{lem1}, we can write
\[
\mathcal{A}(m,F)=rT+(\mu-r)m T-\frac12 m^2 \sigma^2 T \EE_F[\gamma].
\]
Since $\EE_F[\gamma]\in [a,b]$, $\mathcal{A}$ is quadratic in $m$ and has a unique maximum. 
Thus, the planner's best response to any strategy of the adversary is to choose the pure strategy
\[
m^*(\mathbb{E}[\EE_F[\gamma]])=\frac{\mu-r}{\sigma^2 \mathbb{E}[\EE_F[\gamma]]}
\]
where $\mathbb{E}[\cdot]$ is an expected value over a possible randomization of $F$ applied by the adversary. Thus, the planner chooses a strictly positive $m$ in any equilibrium. However, for $m>0$, $\mathcal{A}$ is strictly decreasing in $\EE_F[\gamma]]$. It is thus optimal for the adversary to choose $\EE_F[\gamma]]$ as large as possible, $\EE_F[\gamma]]=b$. Thus, the adversary must play $F_b$ in any equilibrium. Consequently, the planner must play his optimal response to $F_b$ in any equilibrium which is $m^*(b)$. We have thus derived the unique equilibrium. 
\end{proof}

\begin{proof}[Proof of Proposition \ref{RCG}]
Arguing as in the proof of Lemma \ref{ACG}, we can write
\begin{align*}
\mathcal{R}(m,F)=&\left((\mu-r)m T-\frac12 m^2 \sigma^2 T \EE_F[\gamma]\right)
 - \left((\mu-r)m^*_F T-\frac12 (m^*_F)^2 \sigma^2 T \EE_F[\gamma]\right).
\end{align*}
Plugging in the definition of $m^*_F$ and simplifying, this becomes 
\begin{equation}\label{RR}
\mathcal{R}(m,F)=(\mu-r)m T-\frac12 m^2 \sigma^2 T \EE_F[\gamma] - 
\frac12 \frac{(\mu-r)^2}{\sigma^2 \EE_F[\gamma]} T.
\end{equation}
Now suppose that the adversary plays some pure or mixed strategy and denote by $\mathbb{E}[\cdot]$ a possible expectation over the distribution of $F$. Arguing exactly like in the proof of Lemma \ref{ACG}, the planner's best response to any strategy of the adversary is to choose
\[
m^*(\mathbb{E}[\EE_F[\gamma]])=\frac{\mu-r}{\sigma^2 \mathbb{E}[\EE_F[\gamma]]} >0.
\]
In particular, the planner plays a pure strategy in any equilibrium. Now consider the adversary's problem of minimizing $\mathcal{R}(m,F)$ for some fixed $m$. Since $\mathcal{R}(m,F)$ is strictly concave in $\EE_F[\gamma]$, the minimum must be attained at one of the extremes, $\EE_F[\gamma] \in \{a,b\}$. Thus, any (mixed or pure) equilibrium strategy of the adversary can only take values in $\{F_a,F_b\}$. We next analyze how the adversary's choice between $a$ and $b$ depends on $m$. To this end, denote by $\Gamma=g^*(m)$ the risk type for which $m$ is the individually optimal decision as defined in \eqref{gstar}. Consider the condition $\mathcal{R}(m,F_a)>\mathcal{R}(m,F_b)$ which means that $F_b$ is a strict best response of the adversary to a planner who plays $m$. Plugging in $m=m^*(\Gamma)$, we can express $\mathcal{R}(m,F_a)$ using $\Gamma$ as
 \begin{equation}\label{RGamma}
\mathcal{R}(m,F_a) = \frac{(\mu-r)^2 \,T}{\sigma^2}\left(\frac{1}{\Gamma}-\frac{a}{2 \Gamma^2}-\frac{1}{2a}
\right)
\end{equation}
and similarly for $\mathcal{R}(m,F_b)$. Thus, we can write $\mathcal{R}(m,F_a)>\mathcal{R}(m,F_b)$ as
\[
\frac{a}{\Gamma^2}+\frac{1}{a} < \frac{b}{\Gamma^2}+\frac{1}{b}. 
\]
After a few manipulations, this condition turns out to coincide with $\Gamma <\sqrt{a b}$ and thus  $m > m^*(\sqrt{a b})$. The adversary's best response correspondence thus looks as follows: If $m > m^*(\sqrt{a b})$, play $F_b$, i.e., if $m$ is high the adversary makes agents risk averse. If $m < m^*(\sqrt{a b})$, play $F_a$. If $m = m^*(\sqrt{a b})$, the adversary is indifferent between playing $F_a$ and $F_b$. We thus conclude that there cannot be pure equilibria: In any pure equilibrium, the adversary must play either $F_a$ or $F_b$. Suppose the adversary always played $F_a$ in equilibrium. The planner's best response to $F_a$ is $m^*(a)$, $m^*(a)>m^*(\sqrt{a b})$. The adversary's best response to $m^*(a)$ is thus $F_b$ and not $F_a$. Thus, there cannot be a pure equilibrium in which the adversary plays $F_a$. By similar reasoning,  there is no pure equilibrium in which the adversary plays $F_b$. 

We are now ready to pin down the unique mixed equilibrium. Since the support of the adversary's strategy must be $\{F_a,F_b\}$, we know that such a mixed strategy must take the form of playing $F_a$ with some probability $p \in (0,1)$ and $F_b$ otherwise. For such mixing to be optimal, the adversary must be indifferent between playing $F_a$ and $F_b$. We saw that this indifference can only hold if the planner plays the pure strategy $m^*(\sqrt{a b})$. Thus, to achieve an equilibrium, the adversary must mix over the set $\{F_a,F_b\}$ in such a way that $m^*(\sqrt{a b})$ is the planner's best response. This is equivalent to 
\[
\sqrt{a b} = \mathbb{E}[\EE_F[\gamma]] = p \EE_{F_a}[\gamma]+(1-p)
\EE_{F_b}[\gamma] = pa + (1-p)b.
\]
This equation can always be solved for a unique $p$, as the right hand side is continuous and strictly monotonic in $p$, interpolating between $a$ and $b$ with $a <\sqrt{a b} <b $. This proves existence of a unique equilibrium. Solving for $p$  shows that 
\[
p=
\frac{b-\sqrt{ab}}{b-a}=
\frac{\sqrt{b}(\sqrt{b}-\sqrt{a})}{(\sqrt{b}-\sqrt{a})(\sqrt{b}+\sqrt{a})}
=
\frac{\sqrt{b}}{\sqrt{a}+\sqrt{b}}.
\]
It remains to compute the equilibrium value of $\mathcal{R}$. Denote by $\mathbb{E}^*$ the expected value of $F$ chosen according to the adversary's equilibrium strategy. By construction of the mixed equilibrium, we must have
\[
\mathbb{E}^*\left[\mathcal{R}\left(m^*(\sqrt{a b}),F\right)\right]= \mathcal{R}\left(m^*(\sqrt{a b}),F_a\right).
\]
By \eqref{RR}, we can rewrite this into
\begin{align*}
\mathbb{E}^*\left[\mathcal{R}\left(m^*(\sqrt{a b}),F\right)\right]&= 
\frac{(\mu-r)^2 T}{2\sigma^2} \left( 
\frac{2}{\sqrt{ab}}-\frac1a-\frac1b
\right)
= 
- \,\frac{(\mu-r)^2\, T}{2\sigma^2} \left( 
\frac1{\sqrt{a}}-\frac1{\sqrt{b}}
\right)^2.
\end{align*}
This concludes the proof. \end{proof}

\begin{proof}[Proof of Proposition \ref{nRCG}]
As a first step, notice that in any equilibrium the adversary must randomize over more than 
$n$ distributions of risk types. Otherwise, the planner could just implement the optimal strategies for all possible distributions of risk types and achieve $\mathcal{R}=0$ which is his best possible outcome. The adversary can easily do better than this. As a second step, notice that for any given pure strategy of the planner, there are at most $n+1$  risk types which might appear in a best response of the adversary. To see this, suppose that the planner's strategy is some menu $m_1 > \ldots > m_n$.\footnote{Restricting attention to strictly decreasing sequences is without loss of generality. If the planner would choose less than $n$ distinct $m_i$, the number of potential best responses of the adversary is reduced accordingly, arguing in the same way.} To understand the adversary's possible best responses, we can focus on his pure strategies -- even though mixing over these would be required in equilibrium. Since the adversary's goal is to create a bad match between risk types and available strategies, we can focus on degenerate distributions $F_x$ where all mass is concentrated on a single risk type $x$. The risk type $x$ is chosen as unsuitable as possible for the available strategies $m_i$. Inspecting the objective, we see that the candidates for these worst possible locations of $x$ are the interval boundaries $a$ and $b$ and the $n-1$ points $g_i$ at which the corresponding risk type is indifferent between strategies $m_i$ and $m_{i+1}$.  It follows that there are only $n+1$ candidates for the adversary-optimal location of $x$. By Lemma \ref{lemHM2}, the points $g_i$ are determined by the strategies $m_i$ via the harmonic mean condition
\[
\mathcal{H}(g^*(m_i),g^*(m_{i+1}))=g_i.
\]
Combining our two observations, it follows that in any equilibrium in which the planner plays a pure strategy, implementing a menu $m_1 > \ldots > m_n$, the $m_i$ must have the property that the adversary is \textit{indifferent} between the resulting $n+1$ candidates for the risk types he could choose in equilibrium. Otherwise, it cannot be optimal for the adversary to mix over all $n+1$ candidates.  To complete the proof, we need to show that these indifference conditions uniquely pin down the numbers $m_i$ to be $m_i^*$ given in the proposition. 

We begin by verifying that the solution given in the proposition has all the properties we need. As a first step, observe that the sequence $h_i$ is linear and decreasing from $h_0=\sqrt{b}$ to $h_n=\sqrt{a}$. It follows that the sequences $g_i^*$ and $\Gamma_i^*$ are increasing and contained in the interval $[a,b]$, that $g_0^*=a$, $g_n^*=b$ and that $g_{i-1}^*< \Gamma_i^* < g_{i}^*$ for all $i$. We also have the harmonic mean property
\[
\mathcal{H}(\Gamma_i^*, \Gamma_{i+1}^*)=\frac{ab}{h_i\left(\frac{1}{2}h_{i-1}+\frac{1}{2}h_{i+1}\right)}= \frac{ab}{h_i^2}=g_i^*.
\]
Thus, when the planner offers the menu of choices $m_i^*=m^*(\Gamma_i^*)$, agents will sort themselves according to the partition defined by the boundaries $g_i^*$. 

In the next step, we show that, in response to our strategy for the planner, the adversary is indifferent between the strategies $F_{g_i^*}$, $i=0,\ldots,n$,  which put all mass on risk type $g_i^*$. Moreover, as already argued above, the adversary prefers these $n+1$ strategies over all other strategies. To this end, consider the outcome when the adversary plays $F_g$ for some $g\in [g^*_{i-1},g^*_i]$ so that the resulting agents pick strategy $m_i^*$. Arguing like in the derivation of formula \eqref{RGamma}, this leads to the outcome
\begin{equation}\label{RGamma2}
\mathcal{R}(m^*(\Gamma_i^*),F_g)= Z\cdot \left( \frac{1}{\Gamma_i^*} -\frac{g}{2{\Gamma_i^*}^2}-\frac{1}{2g} \right)\;\; \text{ where }\;\; Z=\frac{\mu-r}{\sigma^2}T>0.
\end{equation}
Since $\mathcal{R}(m^*(\Gamma_i^*),F_g)$ is concave in $g$, the adversary who minimizes it can restrict attention to $g\in \{g^*_{i-1},g^*_i\}$. To show that the adversary is indifferent between the strategies $g^*_i$, we thus have to show that 
$\mathcal{R}(m^*(\Gamma_i^*),F_{g^*_i})$ is the same as $\mathcal{R}(m^*(\Gamma_i^*),F_{g^*_{i-1}})$. To this end, observe that 
\[
\mathcal{R}(m^*(\Gamma_i^*),F_{g^*_i}) = \frac{Z}{2ab} \left(2 h_{i-1}h_i -\frac{h_{i-1}^2h_i^2}{2h_i^2}-h_i^2 \right)= - \frac{Z}{2ab} (h_{i-1}-h_i)^2,
\]
that 
\[
\mathcal{R}(m^*(\Gamma_i^*),F_{g^*_{i-1}}) = \frac{Z}{2ab} \left(2 h_{i-1}h_i -\frac{h_{i-1}^2h_i^2}{2h_{i-1}^2}-h_{i-1}^2 \right)= - \frac{Z}{2ab} (h_{i-1}-h_i)^2,
\]
and that $h_{i-1}-h_i=\frac1n(\sqrt{b}-\sqrt{a})$ does not depend on $i$. In particular, the resulting outcome $\mathcal{R}(m^*(\Gamma_i^*),F_{g^*_i})$ does not depend on $i$ and coincides with $\mathcal{R}^*$ given in the proposition. 

We have thus verified that our proposed sequences $g_i^*$ and $\Gamma_i^*$ have the desired indifference properties. To conclude the proof, we need to show uniqueness, i.e., we need to show that there exists at most one sequence with these properties. Our strategy of proof is as follows. We fix a lowest risk type $a$ and a level $\mathcal{R}<0$ for the outcome of the game.\footnote{Recall that by construction $\mathcal{R}$ cannot take positive values.} We then show that there is at most one sequence of numbers $g_0<\Gamma_1<g_1<\Gamma_2<g_2<\ldots$ which can be constructed iteratively from the requirements that $g_0=a$,
\[
\mathcal{R}(m^*(\Gamma_i),F_{g_{i-1}})=\mathcal{R}
\;\;\text{and}\;\;
\mathcal{R}(m^*(\Gamma_i),F_{g_i})=\mathcal{R}.
\]
We argue that the numbers $g_i$ and $\Gamma_i$ are strictly decreasing in $\mathcal{R}$ for fixed $a$. It follows that there can be at most one value of $\mathcal{R}$ that leads to $g_n=b$. This is the desired uniqueness. To conclude the proof, we thus prove the following two claims: 

\noindent\textbf{Claim 1:} Fix $a>0$ and $\mathcal{R} < 0$. Then, for every $\Gamma>a$ there is a unique $g^*>\Gamma$ such that 
$
\mathcal{R}(m^*(\Gamma),F_{g^*})=\mathcal{R}.
$
 Moreover, $g^*$ is strictly increasing in $\Gamma$ and strictly decreasing in $\mathcal{R}$.

\noindent\textbf{Claim 2:} Fix $a>0$ and $\mathcal{R}<0$. Then, for every $g\geq a$ that satisfies $\mathcal{R}>-Z/(2g)$ there is a unique $\Gamma^*>g$ such that 
$
\mathcal{R}(m^*(\Gamma^*),F_{g})=\mathcal{R}.
$
 If $\mathcal{R}>-Z/(2g)$ is violated, no such $\Gamma^*>g$ exists. When it exists, $\Gamma^*$ is strictly increasing in $g$ and strictly decreasing in $\mathcal{R}$.

The first claim shows that we can uniquely recover $g_i$ from $\Gamma_i$ while the second claim shows that we can uniquely recover $\Gamma_i$ from $g_{i-1}$ provided that it exists. Thus, $g_0=a$ and $\mathcal{R}$ pin down the entire sequences of $g_i$ and $\Gamma_i$. Moreover, the monotonicity properties imply that a decrease in $\mathcal{R}$ shifts the entire sequence upwards. There is thus at most one level of $\mathcal{R}$ which leads to $g_n=b$. This proves uniqueness.\footnote{The existence result in the second claim is conditional, i.e., there only exists a suitable $\Gamma$ if $\mathcal{R}$ is not too negative compared to the level of $g$. This is not a problem for our proof as we are merely interested in uniqueness at this point, having settled existence in a constructive way. Intuitively, existence of $\Gamma$ means that it is possible to find a strategy $m^*(\Gamma)$ which is so risk averse that it causes a utility loss of $\mathcal{R}$ for an agent of type $g$. This can only work if $\mathcal{R}$ is less severe than the utility loss from an infinitely risk averse strategy which, due to \eqref{RGamma2}, is given by 
$
\lim_{\Gamma \rightarrow \infty} \mathcal{R}(m^*(\Gamma),F_{g}) =-Z/(2g).
$
}

To prove \textbf{Claim 1}, we define $S=-2\mathcal{R}/Z$ and use \eqref{RGamma2} to write $\mathcal{R}(m^*(\Gamma),F_{g})=\mathcal{R}$ as
\begin{equation}\label{gGamma1}
\frac{1}{g}=S+\frac{2}{\Gamma}-\frac{g}{\Gamma^2}.
\end{equation}
The ideas of this proof are visualized in the upper panel of Figure \ref{FIG:claims}. 
Equation  \eqref{gGamma1} describes intersections between the function $1/g$ on the left hand side and a decreasing linear function of $g$ on the right hand side. Evaluated at $g=\Gamma$, the linear function takes the value $S+1/\Gamma$ which is greater than the value of $1/\Gamma$ on the left hand side since $S$ is positive. By the convexity and non-negativity of $1/g$ it follows that \eqref{gGamma1} has a unique solution $g^*$ which satisfies $g^*>\Gamma$. It remains to verify the monotonicity properties. When we decrease $\mathcal{R}$, we increase $S$, thus shifting the linear function on the right hand side of \eqref{gGamma1} upwards. This moves the intersection to the right, increasing  $g^*$. Thus, $g^*$ is decreasing in $\mathcal{R}$. Finally, increasing $\Gamma$ increases the $\Gamma$-dependent term $2/\Gamma-g/\Gamma^2$ on the right hand side in the relevant range $g>\Gamma$, thus again moving the intersection $g^*$ to the right. To see this, note that the derivative of $2/\Gamma-g/\Gamma^2$ with respect to $\Gamma$ can be written as $2(g/\Gamma-1)/\Gamma^2$ which is positive for $g>\Gamma$.

\begin{figure}[!h]
	\begin{subfigure}{0.8\textwidth}
		\hspace{2.7cm}
		\includegraphics[width=0.9\linewidth]{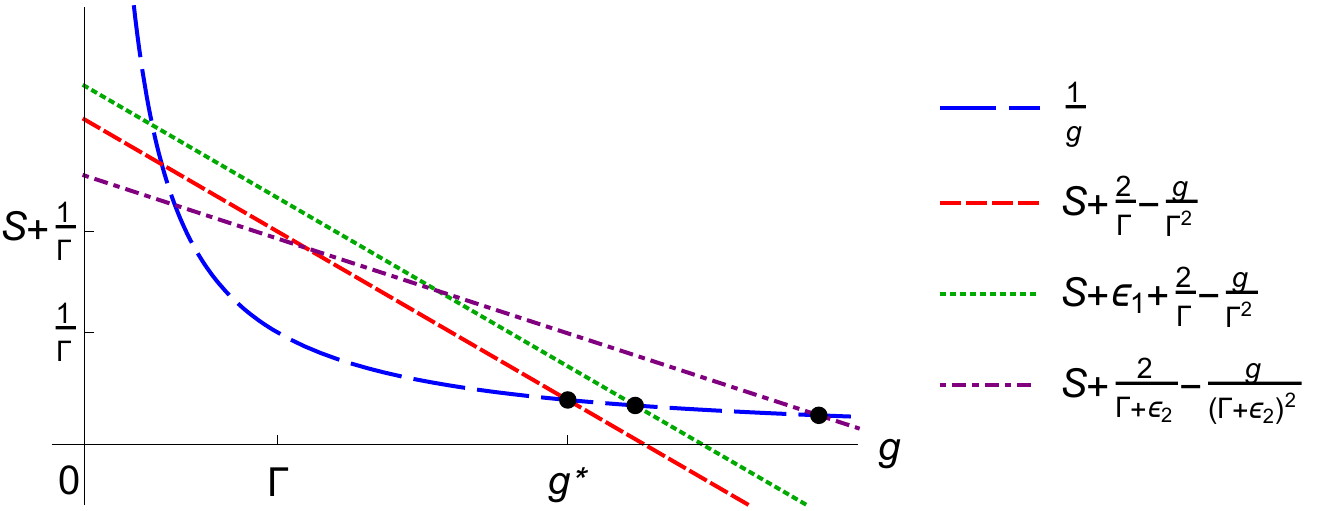}
		\caption*{\hspace{1.5cm} \begin{minipage}[t]{1.00\textwidth}  \footnotesize (a) Proof of Claim 1. The blue and red curves are the left hand side and right hand side of \eqref{gGamma1}. Their intersection in $g^*$ to the right of $\Gamma$ is the existence argument. Moving from the red to the green curve visualizes the comparative statics in $\mathcal{R}$. Moving from the red to the purple curve visualizes the comparative statics in $\Gamma$.\end{minipage}}

	\end{subfigure}
	\begin{subfigure}{.8\textwidth}
	\hspace{2.7cm}
		\includegraphics[width=0.9\linewidth]{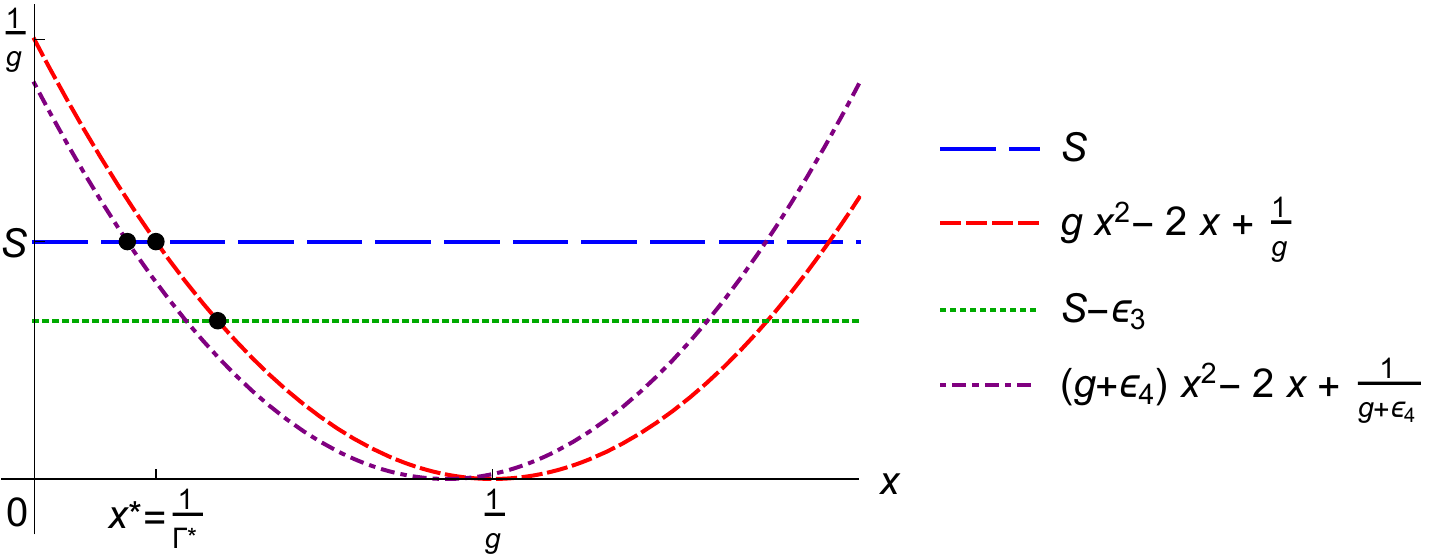}  
\caption*{\hspace{1.5cm} \begin{minipage}[t]{1.00\textwidth}  \footnotesize (b) Proof of Claim 2. The blue and red curves are the left hand side and right hand side of \eqref{gGamma2}. Their intersection in $x^*$ in the interval $(0,1/g)$ is the existence argument. Moving from the blue to the green curve visualizes the comparative statics in $\mathcal{R}$. Moving from the red to the purple curve visualizes the comparative statics in $g$.\end{minipage}}
	\end{subfigure}
	\caption{Claims 1 and 2.}
	\label{FIG:claims}
\end{figure}

To prove \textbf{Claim 2}, we write again $S=-2\mathcal{R}/Z$ and note that our constraint $\mathcal{R}>-Z/(2g)$ can be written as $S<1/g$. Next we change variables, writing  $x=1/\Gamma$. We use \eqref{RGamma2} to write $\mathcal{R}(m^*(\Gamma),F_{g})=\mathcal{R}$ as
\begin{equation}\label{gGamma2}
S=gx^2 -2x +\frac{1}{g}.
\end{equation}
We visualize the ideas of this proof in the lower panel of Figure \ref{FIG:claims}. 
On the right hand side, we have a quadratic polynomial in $x$ which is minimized at $x=1/g$ taking the value $0$. As $S$ is positive, there thus exists a unique $x^*<1/g$ which solves \eqref{gGamma2}. Existence of $x^*$ only translates into existence of a meaningful $\Gamma^*=1/x^*$ when $x^*>0$. To see that this holds, note that for $x=0$ the right hand side of \eqref{gGamma2} takes the value $1/g$ and that $1/g>S$ holds by our constraint on $\mathcal{R}$. Thus, the (left) intersection $x^*$ between the quadratic right hand side and the constant left hand side $S$ satisfies $0<x^*<1/g$. This implies $\Gamma^*=1/x^*>g$. To conclude the proof, we need to argue that $x^*$ increases in $\mathcal{R}$ and decreases in $g$. For $\Gamma^*=1/x^*$, this then implies the opposite monotonicity behavior. When $\mathcal{R}$ increases, $S$ decreases. This moves the two intersections between the left and right hand sides of \eqref{gGamma2} closer together, thus increasing the lower intersecting point $x^*$.  Accordingly, $\Gamma^*$ decreases in $\mathcal{R}$. Finally consider an increase in $g$. This leaves the left hand side of  \eqref{gGamma2} unaffected while the derivative with respect to $g$ of the right hand side is $x^2-1/g^2$ which is negative in the relevant range of $0<x<1/g$. Thus, increasing $g$ decreases the right hand side of  \eqref{gGamma2} around the intersection, moving $x^*$ to the left. Thus, $\Gamma^*=1/x^*$ increases in $g$.
\end{proof}

\begin{proof}[Proof of Corollary \ref{corriab}]
Note that we can write $r_i^*(a,b)$ in terms of $k=b/a$ as follows:
\[
r_i^*=\frac{\left(\frac{i}{n}\sqrt{\frac1k}+\frac{n-i}{n}\right)^{-2}-1}{k-1}.
\] 
We can thus consider $r_i^*$ as a function of $k$. This shows (i). For (ii) and (iii), we compute the limit $k\rightarrow \infty$ of $r_i^*(k)$ by applying L'Hospital's rule and simplifying,
\begin{equation}\label{lhospital}
\lim_{k\rightarrow \infty} r_i^*(k)=  \lim_{k\rightarrow \infty} \; \frac{i}{n} \;\frac{1}{\left(\frac{i}{n}+\frac{n-i}{n}\,\sqrt{k}\right)^3}=0.
\end{equation}
For (iv), we replace the limit in \eqref{lhospital} by a limit $k\downarrow 1$ to obtain the limiting value of $\frac{i}{n}$.
\end{proof}

\singlespacing

\end{document}